\DeclarePairedDelimiter{\floor}{\lfloor}{\rfloor}
\numberwithin{equation}{section}
\titleformat*{\section}{\large \bfseries}
\titleformat*{\subsection}{\normalsize \bfseries}
\titleformat*{\subsubsection}{\small \bfseries}
\newcommand\norm[1]{\left\lVert#1\right\rVert}
\numberwithin{equation}{section}
\theoremstyle{definition}
\newtheorem{theorem}{Theorem}
\newtheorem{assumption}{Assumption}
\newtheorem{lemma}{Lemma}
\newtheorem{example}{Example}
\newtheorem{remark}{Remark}
\newtheorem{corollary}{Corollary}
\begin{document}
\pagenumbering{roman}







\title{ {\Large \textbf{Estimation and Inference in Threshold Predictive Regression Models with Locally Explosive Processes}\thanks{I wish to thank Jean-Yves Pitarakis and Tassos Magdalinos for helpful discussions and constructive feedback. I also wish to thank session participants of the \textit{IX Workshop in Time Series Econometrics} at the University of Zaragoza, as well as the participants of the 2018 and 2019 \textit{Econometric Workshops} at the Department of Economics of the University of Southampton for helpful discussions. } \\
}
}

%

\author{\textbf{Christis Katsouris}\footnote{Lecturer in Economics, Department of Economics, University of Exeter Business School, Exeter  EX4 4PU, United Kingdom. \textit{E-mail Address}: \textcolor{blue}{C.Katsouris@exeter.ac.uk}.} \\ \textit{Department of Economics}, \\  \textit{University of Exeter} \\   \\ \small{ \textcolor{blue}{First Version:}\footnote{The manuscript was previously titled: "\textit{Inference in Threshold Predictive Regression Models with Hybrid Stochastic Local Unit Roots}", when the author was a Ph.D. Candidate in Economics at the Department of Economics of the University of Southampton. Financial support from the VC PhD studentship of the University of Southampton is gratefully acknowledged.} June 2021}  \\ \small{ \textcolor{blue}{This Version:} May 2023}          
      }



\date{}

\maketitle

\begin{abstract}
\vspace*{-0.8 em}
In this paper, we study the estimation of the threshold predictive regression model with hybrid stochastic local unit root predictors. We demonstrate the estimation procedure and derive the asymptotic distribution of the least square estimator and the IV based estimator proposed by \cite{magdalinos2009limit}, under the null hypothesis of a diminishing threshold effect. Simulation experiments focus on the finite sample performance of our proposed estimators and the corresponding predictability tests as in \cite{gonzalo2012regime}, under the presence of threshold effects with stochastic local unit roots. An empirical application to stock return equity indices, illustrate the usefulness of our framework in uncovering regimes of predictability during certain periods. In particular, we focus on an aspect not previously examined in the predictability literature, that is, the effect of economic policy uncertainty. 
\\

\textbf{Keywords:} Threshold predictive regression, stochastic local unit roots, locally explosive processes, Nonlinear diffusion, IVX filtration, Stock return predictability, Uncertainty. 
\\

\textbf{JEL} classification numbers: C12, C22, C58
\end{abstract}
 

\newpage 

\setcounter{page}{1}
\pagenumbering{arabic}

\section{Introduction}

Consider the linear predictive regression model with the following form 
\begin{align}
y_t = \alpha_0 + \beta_0^{\prime} \boldsymbol{x}_{t-1} + u_{yt} \equiv \boldsymbol{\theta}_0^{\prime} \boldsymbol{X}_{t-1} + u_{yt},  \ \ \text{with} \ t = 1,..., n \ \ \text{and} \ \boldsymbol{x}_0 = 0,
\end{align}
where $y_t \in \mathbb{R}$, $\boldsymbol{X}_{t-1} \in \mathbb{R}^{p+1} = \big( 1, x_{t-1}^{\prime} \big)^{\prime}$ includes both the intercept and the $p-$dimensional regressor vector $x_{t-1}$, with the parameter vector being $\theta_0 = \left( \alpha_0, \beta_0 \right)$ such that $\beta_0$ consists of $p$ coefficients. Moreover, throughout the paper, we assume that regressors are generated via the following locally persistent autoregression process studied by \cite{lieberman2017multivariate, lieberman2018iv, lieberman2020hybrid}
\begin{align}
\label{LSTUR}
\boldsymbol{x}_t = \boldsymbol{R}_{nt} \boldsymbol{x}_{t-1} + \boldsymbol{u}_{xt}, \ \ \text{with} \ \ \boldsymbol{R}_{nt} = 
\left[  \underbrace{\left( \boldsymbol{I}_p +  \displaystyle \frac{ \boldsymbol{C}_p }{n} \right)} + \underbrace{ \left( \displaystyle  \frac{1}{ \sqrt{n} } \langle \varphi , \boldsymbol{u}_{\varphi  t} \rangle \otimes \boldsymbol{I}_p +  \frac{1}{ 2n } \langle \varphi , \boldsymbol{u}_{\varphi  t} \rangle^2 \otimes \boldsymbol{I}_p  \right) } \right] 
\end{align}
where $\varphi$ is an unknown coefficient such that $\boldsymbol{\varphi} := \big( \varphi_1,..., \varphi_d \big)$ with $d \equiv p$, that is, we assume that the parameter $\varphi$ has the same dimension as the number of regressors included in the model. Notice that the inner product in expression \eqref{LSTUR}  implies that $\langle \varphi , u_{\varphi  t} \rangle = \sum_{j=1}^d \varphi_j u_{\varphi, jt} \equiv \widetilde{\boldsymbol{\varphi}}_{( 1 \times n )}$. Although the persistence coefficients, $c_i \ \forall \ i \in \left\{ 1,..., p \right\}$ remain unchanged across $t$, the autocorrelation coefficient is estimable separately for each $t \in \left\{ 1,..., n \right\}$ due to the terms $\langle \varphi , u_{\varphi  t} \rangle \otimes I_p$ and $\langle \varphi , u_{\varphi  t} \rangle^2 \otimes I_p$. Furthermore, the consistent and robust estimation (to the presence of the unknown persistence) of the true parameter vector $\boldsymbol{\theta}_0$ is the main interest of our study. Additionally, the estimation of the parameter vector $\boldsymbol{\varphi}$ is a relevant but challenging  aspect due to its stochastic nature. We focus on estimation and inference aspects in threshold predictive regression models with locally explosive regressors. In particular, the non time-invariant property of the autocorrelation coefficient in the nonstationary autoregressive model implies a time-specific estimation of these matrices
\begin{align*}
\boldsymbol{R}_{1n} 
&= 
\left( \boldsymbol{I}_p +  \displaystyle \frac{\boldsymbol{C}_p}{n} \right) + \left( \displaystyle  \frac{1}{ \sqrt{n} } \langle \boldsymbol{\varphi}, \boldsymbol{u}_{\varphi  1} \rangle \otimes \boldsymbol{I}_p +  \frac{1}{ 2n } \langle \boldsymbol{\varphi}, \boldsymbol{u}_{\varphi  1} \rangle^2 \otimes \boldsymbol{I}_p  \right) \ \text{at} \ t = 1,
\\
\boldsymbol{R}_{2n} 
&= 
\left( \boldsymbol{I}_p +  \displaystyle \frac{\boldsymbol{C}_p}{n} \right) + \left( \displaystyle  \frac{1}{ \sqrt{n} } \langle \boldsymbol{\varphi}, \boldsymbol{u}_{\varphi  2} \rangle \otimes \boldsymbol{I}_p +  \frac{1}{ 2n } \langle \boldsymbol{\varphi}, \boldsymbol{u}_{\varphi  2} \rangle^2 \otimes \boldsymbol{I}_p  \right)\ \text{at} \ t = 2, 
\\
\vdots \ \ \ &= \ \ \ \vdots
\\
\boldsymbol{R}_{jn} 
&=
\left( \boldsymbol{I}_p +  \displaystyle \frac{\boldsymbol{C}_p}{n} \right) + \left( \displaystyle  \frac{1}{ \sqrt{n} } \langle \boldsymbol{\varphi}, \boldsymbol{u}_{\varphi  j} \rangle \otimes \boldsymbol{I}_p +  \frac{1}{ 2n } \langle \boldsymbol{\varphi}, \boldsymbol{u}_{\varphi  j} \rangle^2 \otimes \boldsymbol{I}_p  \right)\ \text{at} \ t = j, 
\end{align*}
for all $j \in \left\{ 1,..., n \right\}$. Furthermore, in this paper we focus in the case of \textit{locally explosive processes} which implies that for the diagonal matrix $\boldsymbol{C}_p = \mathsf{diag} \left( c_1,..., c_p \right)$, it holds that $c_i > 0 \ \forall \ i \in \left\{ 1,..., p \right\}$. Regardless of the additional terms in the particular representation of a local-to-unity process, it has been proved that the partial-sum process of $\boldsymbol{x}_t$ weakly convergence to the following limit process
\begin{align*}
\frac{ \boldsymbol{x}_{ \floor{nr} } }{  \sqrt{n} } \Rightarrow  \boldsymbol{G}_{ c, \varphi }(r) 
:= 
\mathsf{exp} \big\{ r \boldsymbol{C}_p + \boldsymbol{\varphi}^{\prime} \boldsymbol{B}_{ u_{ \varphi } }(r)  \otimes \boldsymbol{I}_p \big\} \left( \int_0^r  \mathsf{exp} \big\{ - s \boldsymbol{C}_p - \boldsymbol{\varphi}^{\prime} \boldsymbol{B}_{ u_{ \varphi } }(s) \big\} d\boldsymbol{B}_{ u_{x}  } (s) \right), r \in [0,1]. 
\end{align*}

\newpage

\begin{remark}
The specification of the autocorrelation coefficient given by expression \eqref{LSTUR} implies that there is a second unobserved source of variation to the local-to-unity behaviour of the nonstationary autoregressive process, which can be interpreted as exogenous variation to the system. Specifically, this approach introduces a time specific variation through this exogenous vector of covariates which can capture the effect of the reaction of the dependent variable (e.g., firm stock returns) to these unobserved (exogenous) shocks (e.g., such as the economic policy uncertainty).      
\end{remark}
  
Consider the stochastic process given by 
\begin{align}
\label{SLUR.process}
X_t = \beta_{nt} X_{t-1} + u_t, \ \ \beta_{nt} = \mathsf{exp} \left( \frac{c}{n} + \frac{ \varphi^{\prime} v_t }{ \sqrt{n} } \right) \ \ \ \text{with} \ \ \ X_0 = u_0    
\end{align}
where $c \in \mathbb{R}_{+}$ and $\varphi \in \mathbb{R}^d$. The particular specification implies that the autoregressive coefficient corresponds to a stochastic time varying parameter that fluctuates in the vicinity of unity according to the proprerties of the exogenous covariate $v_t$, the degree of persistence $c$, the sample size $n$. Based on these settings one can then estimate the unknown parameter pair $( c, \varphi )$ using a NLLS estimation approach (see, \cite{lieberman2020hybrid}) such that 
\begin{align}
\big( \widehat{c}_n, \widehat{\boldsymbol{\varphi}}_n \big) := \underset{ ( c, \varphi ) \in \Theta }{ \mathsf{arg min} } \ \sum_{t=1}^n \big( X_t - \beta_{nt} ( c, \varphi ) X_{t-1} \big)^2 
\end{align}

Overall, in this paper similar to the spirit of \cite{koul1990weakly}, we consider the construction of estimators that are asymptotically efficient over a range of nuisance parameters, known as \textit{adaptive}. Thus in our modeling environment, we consider two nuisance parameters: one being the degree of persistence, $c$, and the second being the unknown coefficient, $\boldsymbol{\varphi}_t := \big( \varphi_{1},..., \varphi_{d} \big)$, that corresponds to the exogenous covariate $\boldsymbol{v}_t$. Furthermore, we are concerned with the least squares estimators and endogenously instrumental generated estimators of the predictive and threshold predictive regression models when the nonstationary process that generate the regressors have an additional exogenous variation. In particular, our theory relies on making use of the \textit{Ornstein-Uhlenbeck processes within an uncertain environment}, since the coefficients $\left\{ \beta_{nt} \right\}_{t=1}^n$ are random processes.    

\medskip

Thus in relation to the process \eqref{SLUR.process}, we define the \textit{Ornstein-Uhlenbeck process in a random external environment} $\left\{ G_{c, \varphi}(t) \right\}_{ t \in [0,1]}$ driven by the process $G$ such that 
\begin{align*}
\frac{1}{ \sqrt{n} } X_{\floor{nt} } := \frac{1}{ \sqrt{n} }  \sum_{t=1}^{ \floor{nt} } X_t \Rightarrow  G_{c, \varphi}(t) = e^{ t c + \varphi^{\prime} B_v (t)  } \left( \int_0^t e^{ - sc - \varphi^{\prime} B_u (s) } d B_u(s) - \varphi^{\prime} \Delta_{uv} \int_0^s e^{ - sc - \varphi^{\prime} B_v(s) } ds \right).    
\end{align*}
Furthermore, consider the autoregressive regression model with stochastic local unit roots \eqref{SLUR.process} and assume that the coefficient $\varphi$ is known. Then,   \cite{lieberman2020hybrid} proved that
\begin{align}
\big( \widehat{c}_n - c \big) \Rightarrow \left( \int_0^1 G^2_{c, \varphi}(r) dr  \right)^{-1} \left( \int_0^1 G^2_{c, \varphi}(r) d B_u (r) + \Delta_{vu}^{\prime} \varphi \int_0^1 G_{c, \varphi}(r) dr + \lambda_{uu}  \right)     
\end{align}

\newpage 

Consequently, in the case when the parameter vector $\varphi = 0$, the above limit result becomes
\begin{align}
\big( \widehat{c}_n - c \big) \Rightarrow \left( \int_0^1 G^2_{c}(r) dr  \right)^{-1} \left( \int_0^1 G^2_{\varphi}(r) d B_u (r) +  \lambda_{uu}  \right)     
\end{align}
which corresponds to the standard asymptotic theory result under the LUR specification of the autocorrelation coefficient without the additional exogenous variation term. 

\begin{remark}
In the case when the parameter vector is considered to be unknown, then the consistent estimation of this nuisance parameter is more challenging and depends on whether or not we assume the presence of endogeneity in the system such as $( \Sigma_{vu}  \neq 0 )$. The main result regarding this issue proved in the study of \cite{lieberman2020hybrid} is that the distribution of $\widehat{\varphi}$ depends on the localizing coefficient $c$ through the approximation of the functional $G_{c, \varphi}(r)$. Specifically, the convergence rates in the theorem imply that the estimator is consistent when $\Sigma_{vu} = 0$. On the other hand, under the presence of endogeneity, that is, $\Sigma_{vu} \neq 0$, that parameter vector $\varphi$ may be estimated consistently using instrumental variables (see Section \ref{Section.IVX} and \cite{liu2022robust}).        
\end{remark}

Our next concern is the estimation of the threshold effects in the linear predictive regression model with locally persistence processes (i.e., or locally explosive processes in our settings). A key contribution of our study is that we focus on comparing the asymptotic behaviour of both the classical least squares estimator as well as the instrumental variable based estimator (IVX). Although the study of \cite{chen2022estimation} considers a similar setting of a threshold regression model with stochastic local unit root regressors and then develop a statistical testing procedure for the presence of threshold effects using the sup OLS-Wald test statistic, the limiting distribution is not free of any nuisance parameters due to its dependence on the unknown degree of persistence. On the other hand, in this paper we implement the IVX implementation of the sup-Wald test in the threshold predictive regression model that has been recently proposed by \cite{liu2022robust}, in the case of the linear conditional mean and conditional quantile predictive regressions models.

Notice that the proposed data generating process that allows to consider departures from unit root behaviour involve temporary departures from unity at any sample point that can move the process in stationary or explosive directions. An alternative approach than the one we focus in our work, is to consider the unknown localizing coefficient of persistence to be time-varying such that $c_n \equiv c_n (t/n)$. Such type of functional coefficient representation is examined in the studies of \cite{bykhovskaya2020point} and \cite{bykhovskaya2018boundary}. Although, in this scenario the autocorrelation coefficient varies with time and can be arbitrarily close to unity the limit process of partial sum processes converge to the standard OU process rather than the OU process in a random environment as we have in this paper. The particular limit appears regardless of the presence of the threshold variable, which indeed complicates the asymptotic theory when considering the convergence results of estimators and test statistics under the assumption of diminishing threshold effects.

\newpage 

The predictive regression model is a popular model widely used the past two decades for investigating the predictability puzzle in stock returns. Specifically, the stock return predictability literature examines various statistical and empirical aspects; seminal studies include those of \cite{pesaran1995predictability} and \cite{stambaugh1999predictive}. A particular phenomenon discussed in the more recent literature is the detection of the so-called "pockets of predictability", which means that the presence of predictability is characterized by threshold effects and nonlinearities in relation to macroeconomic events and financial stability. On the other hand, a more realistic feature is to assume that the persistence properties of predictors can be stochastic, alternating along with the aforementioned features. In this paper, we study aspects of estimation and inference in threshold predictive regression models with \textit{locally explosive regressors}. In particular, we develop the asymptotic theory for the OLS based estimator as well as the instrumental variable (IVX) based estimator proposed by \cite{magdalinos2009limit} in threshold predictive regression models.    

Various studies proposed inference methodologies for threshold models under the assumption of time series stationarity such as in \cite{hansen1996inference}, \cite{hansen2000sample}, \cite{caner2001threshold}, \cite{gonzalo2002estimation}, \cite{pitarakis2008comment}, \cite{galvao2011thresholdt, galvao2014testing}, \cite{kourtellos2014structural, kourtellos2017endogeneity} and \cite{chiou2018nonparametric}. Furthermore, in cointegrated and predictive regression models, \cite{gonzalo2006threshold, gonzalo2012regime, gonzalo2017inferring} and \cite{chen2015robust} proposed predictability tests under the presence of threshold effects for persistent regressors modelled by the local-unit-root specification. The particular predictability tests inspired by the IVX estimator of \cite{phillipsmagdal2009econometric} are examined by \cite{kostakis2015Robust} in linear predictive regressions with abstract degree of persistence. 

In addition, the time series econometrics literature also focused on developing methodologies for modelling financial bubbles which include the study of explosive and nonstationary processes as presented in the papers of \cite{phillips2007limit}, \cite{nielsen2010analysis}, \cite{magdalinos2009limit}, \cite{guo2019testing}, \cite{pedersen2020testing} and \cite{duffy2021estimation}. Furthermore, another relevant aspect is the development of econometric methods for detecting and dating market exuberance, with some relevant studies being those of \cite{phillips2011explosive} and \cite{phillips2011dating} in which the focus is both estimation of such models under the presence of explosive bubbles as well as the dating of the exact appearance of this event. Moreover, \cite{farmer2019pockets}, 
\cite{demetrescu2020testing} and \cite{georgiev2021extensions} consider the modeling and testing for the presence of time period specific predictability while \cite{bykhovskaya2020point} consider modelling a time-varying degree of persistence as a way to capture these effects. 

Therefore, motivated by these stylized facts in many financial and macroeconomic data,  \cite{lieberman2017multivariate, lieberman2020hybrid} acknowledge the shifting of persistence in time series and develop a more general specification of nonstationary processes which includes both the unknown persistence in the time series properties of regressors as well as a stochastic departure from local to unity behaviour. In this direction, the study of \cite{chen2022estimation} propose a framework for estimation and inference in threshold regression with hybrid stochastic local unit root regressors. These authors develop their own asymptotic theory results which are useful when considering estimation in a threshold environment with stochastic unit root regressors (see also \cite{liu2022robust}).

\newpage

In terms of the estimation and identification of threshold effects and nonlinearities  in predictability (e.g., \cite{maasoumi2002entropy}, \cite{gonzalo2012regime} and \cite{michaelides2016non}) via autoregressions\footnote{A seminal work on the threshold model for autoregression processes is proposed by \cite{tsay1989testing, tsay1998testing} while \cite{staiger1997instrumental} introduce the framework of IV regression with weak instruments.}  is crucial for correctly modelling economic phenomena such as price asymmetries and herd behaviour in financial markets (see, \cite{lux1995herd}, \cite{brunnermeier2009deciphering} and \cite{kapetanios2014nonlinear}) as well as in valuing macroeconomic fundamentals (see, \cite{bohn1998behavior} and \cite{hatchondo2016debt}).
When these features are not correctly captured they can manifest as structural breaks resulting to biased parameter estimates. On the other hand, the structural break literature emphasized the presence of shifts in persistence  as in \cite{horvath2020sequential}. 

Therefore, the inclusion of hybrid stochastic unit roots to the threshold predictive regression model is a novel feature which can accommodate more realistic aspects when modelling return predictability. In terms of the econometric challenges appeared within our setting these include the problem of a nuisance parameter identification, only under the alternative hypothesis. To deal with this aspect, we follow similar techniques as the ones proposed by \cite{davies1977hypothesis, davies1987hypothesis}, \cite{andrews1993tests} \cite{andrews1994optimal}, and \cite{hansen1996inference}. Our contributions in this paper are threefold. Firstly, we study relevant aspects to the estimation and inference in threshold predictive regression models with a more general persistence properties, and specifically focusing on \textit{locally explosive processes} contributing this way to the recent predictability literature. Secondly, we study the applied and theoretical implementation of the IVX estimator proposed by \cite{phillipsmagdal2009econometric} within our setting. Thirdly, an empirical application examines the regime-specific predictability hypothesis using the economic uncertainty index as the switching threshold effect. 

\paragraph{Structure.}

The rest of the paper is organized as follows. In Section \ref{Section2}, we introduce the model estimation procedure and main assumptions of the proposed econometric environment. In Section \ref{Section3}, we develop the asymptotic theory for the two proposed estimators under the presence of a threshold effect with locally explosive regressors. In Section \ref{Section4}, we examine the finite-sample performance of the sup-Wald type test statistics and summarize our main findings based on an extensive Monte Carlo simulation study. Section \ref{Section5} provides empirical evidence of threshold effects and regime-specific predictability using financial ratios of equity indices. Section \ref{Section6}, summarizes our main conclusions. The proofs of the main results in the paper can be found in the Appendix. 

\paragraph{Notation.} We denote with $\left( B_t \right)_{ t \in [0,1] }$ the Brownian motion defined on the probability space $\left( \Omega, \mathcal{F}, \mathbb{P} \right)$ which is independent of the $\sigma-$field $\mathcal{F}$. The limiting process $G$ is an $\mathcal{F}-$conditional Gaussian martinagle with $\mathcal{F}-$conditional mean. In particular, for each $t > 0$, $G_t$ or equivalently $G(t)$ has a mixed normal distribution. Moreover, we denote with $\norm{ . }_F$ and $\norm{ . }_1$ the spectral norm, $L_2$ (Frobenius) and the $L_1$ norm respectively. The symbol$=_d$ denotes equivalence in distribution and the notation $\to_p$ convergence in probability, while $\Rightarrow$ denotes weak convergence in the function space $C([0,1])$ equipped with a suitable topology.

\newpage

\section{Model Estimation and Assumptions}
\label{Section2}

Consider the following threshold predictive regression model
\begin{align}
\label{model1}
y_t = 
\begin{cases}
\alpha_1 + \beta_1^{\prime} x_{t-1} + u_{yt} , \ \ q_{t-1} \leq \gamma \\
\alpha_2 + \beta_2^{\prime} x_{t-1} + u_{yt} , \ \ q_{t-1} > \gamma
\end{cases}
\end{align}
where $y_t \in \mathbb{R}$ and $x_t$ is a $p-$dimensional vector of predictors parametrized by a vector of locally explosive processes (a special case of locally stochastic unit roots) given by expression \eqref{LSTUR}. In the special case that the model has a single regressor then its assumed to be generated by 
\begin{align}
\label{model2}
x_{t} &= \rho_{nt} x_{t-1} + u_{xt}, \ \ \rho_{nt} = \mathsf{exp} \left\{ \frac{c}{ n } +  \frac{1}{ \sqrt{n} } \langle \varphi, u_{ \varphi t}  \rangle \right\}. 
\end{align} 
where $c$ is the localizing coefficient of persistence such that $c_j > 0  \ \forall \ j \in \left\{1,...,p \right\}$ and $\upepsilon_{k,t}$ is $p \times 1$ vector with the initial condition given by $x_{1} = u_{x1}$. The particular hybrid local unit root specification allows for general types of dependence to be modelled by the threshold predictive regression given by \eqref{model1} and \eqref{model2} with $t \in \left\{ 1,...,n \right\}$ but we focus on \textit{locally explosive processes}. 

\begin{remark}
The above proposed structure of the predictive regression model implies that although the autocorrelation matrix $R_{nt}$, is not time-varying (dynamic) it is not time invariant either, which implies that is a function of time $t$. In other words, this implies that the autocorrelation coefficient has different effect throughout time, although the degree of persistence remains fixed.  In this paper, we interpret this aspect in terms of the persistence properties of regressors as a \textit{locally explosive process}. Although we do not test for the null of stationarity against the alternative of locally explosive processes, our interest is in the identification and estimation of the threshold predictive regression model with locally exposive regressors. In other words, these processes have a stochastic component which implies a stochastic explosive regime when the degree of persistence is positive.   
\end{remark}

Consider the vector $\xi_t = \left( u_{yt}, u^{\prime}_{xt}, u^{\prime}_{ \varphi t}  \right)^{\prime}$, where both both $u^{\prime}_{xt}$ and $ u^{\prime}_{ \varphi t}$ are $p-$dimensional time series vectors, such that $\xi_t$ is a strictly stationary martingale difference sequence. The partial sums process satisfy the invariance principle (\cite{phillips1987time, phillips1987towards}, \cite{phillips1988testing}) such that
\begin{align}
\label{ip}
\frac{1}{ \sqrt{n} } \sum_{t=1}^{\floor{nr} } \xi_t \Rightarrow \boldsymbol{B}_{\xi } (r) := \big[ B_{u_y } (r) , \boldsymbol{B}_{ u_x} (r), \boldsymbol{B}_{ u_\varphi }(r) \big]^{\prime} \equiv \text{BM} \left( \mathbf{\Sigma}_{\xi \xi} \right),
\end{align}
where the covariance matrix $\mathbf{\Sigma}_{\xi \xi}$ is defined as below
\begin{align}
\boldsymbol{\Sigma}_{\xi \xi} 
:= 
\begin{bmatrix}
\sigma_{y} & 0 & 0 \\
0 & \boldsymbol{\Sigma}_{xx} & 0 \\
0 & 0 & \boldsymbol{\Sigma}_{\varphi \varphi }\\
\end{bmatrix},
\end{align}
where $\mathbf{B}_{\xi }(r)$ is a vector Brownian motion with a positive-definite matrix  $\mathbf{\Sigma}_{\xi \xi}$ such that all these covariance matrices are positive-definite (for the innovation structure, see \cite{phillips1992asymptotics}). 

\newpage

The econometric identification of the threshold variable is implemented via the conditional mean specification function given by  expressions \eqref{model1}-\eqref{model2}, which implies that the presence of a threshold effect is represented 
by a fixed threshold parameter which is data-driven such that $\gamma \in \Gamma := \left[ \gamma_1, \gamma_2 \right]$. The lower threshold and upper threshold bounds are estimated such that $\mathbb{P} \left( q_t \leq \gamma_1 \right) = \pi_1 > 0$  and $\mathbb{P} \left( q_t > \gamma_1 \right) = \pi_2 < 1$. Moreover, we define the indicator variables to determine the threshold regime by $I_{1t} \equiv \mathbb{P} \left( q_t \leq \gamma \right)$ and  $I_{2t} \equiv \mathbb{P} \left( q_t > \gamma \right)$. Then, by replacing the threshold variable with a uniformly distributed random variable, $U \sim Unif[0,1]$, and using the transformation principle\footnote{In practice this property allows us to estimate the threshold variable using the uniform distribution without having to impose additional regularity conditions on the CDF of the unknown threshold.} which allows us to transform the CDF of any random variable to that of uniformly distributed random variables such that $I \left( q_t \leq \gamma \right) = I \left( F(q_t) \leq F(\gamma) \right) \equiv I \left( U_t \leq \lambda \right)$. 

For notation convenience, we express the threshold predictive regression model in a matrix form. To do this, we denote with $y$ the vector stacking $y_t$ and $X_i$ the matrix stacking $\big(  I_{it} \ x_t I_{it} \big)$ for $i = 1,2$ such that $\boldsymbol{y} = \boldsymbol{X}_1 \boldsymbol{\theta}_1 + \boldsymbol{X}_2 \boldsymbol{\theta}_2 + \boldsymbol{u}$, with $\boldsymbol{X} = \big( \boldsymbol{X}_1 \ \boldsymbol{X}_2 \big)$, $\boldsymbol{\theta} = \big( \theta_1, \theta_2  \big)$ and $\theta_i = \left( \alpha_i, \beta_i \right)$ for $i =1,2$. Alternatively, in the case when the model has only slopes, we can define $y = X(\gamma) \theta + u$ with $\theta = ( \beta^{\prime}, \delta^{\prime} )$ and $X_t(\gamma ) = \left[ x_t^{\prime} , x_t^{\prime} I \left( q_t \leq \gamma \right) \right]$. For the remaining of the paper we use the symbol $\top$ to denote the transpose of a matrix or a vector. 

Hence, given a nonstochastic threshold parameter $\gamma \in \left[ \gamma_1, \gamma_2 \right]$ we can estimate $\theta ( \gamma )$ with
\begin{align}
\widehat{ \boldsymbol{\theta} }( \gamma ) = \big[ \boldsymbol{X}(\gamma)^{\top} \boldsymbol{X}(\gamma) \big]^{-1} \boldsymbol{X}( \gamma )^{\top} \boldsymbol{y}, 
\end{align} 
Then, the least squares estimator of $\gamma$ is given by the following optimization function
\begin{align}
\widehat{\gamma} = \underset{ \gamma \in [ \gamma_1, \gamma_2  ]  }{ \text{ arg min } }  \ \text{SSR} \left( \gamma \right), 
\end{align}
where the objective function is defined as
\begin{align}
\text{SSR} \left( \gamma \right) = \sum_{t=1}^n \big( y_t -\widehat{ \boldsymbol{\theta} }(\gamma)^{\top} \boldsymbol{X}_t(\gamma) \big)^2.
\end{align}
and $\text{SSR} \left( \gamma \right)$ represents the concentrated sum of squared errors. To have consistent and efficient statistical inferences, we consider that the following general assumptions hold. In particular, for the consistent estimation of the threshold effects, we impose related regulatory conditions which are given by Assumptions \ref{assumption1}-\ref{assumption4} below.

\begin{assumption}
\label{assumption1}
$\left\{ q_t \right\}$ is assumed to be a strictly stationary and ergodic mixing sequence with a mixing coefficient $\psi_{j}$ satisfying $\sum_{j=1}^{\infty} \psi^{ \frac{1}{j} - \frac{1}{\nu} } < \infty$ for some $\nu > 2$.  
\end{assumption}

\begin{assumption}
\label{assumption2}
Let $\mathcal{F}_{n,t}$ be the smallest $\sigma-$field generated by $\left\{ \left( q_{j+1}, \boldsymbol{\xi}_j^{\top} \right): 1 \leq j \leq t \leq n  \right\}$. Then, $\left\{ \boldsymbol{\xi}_t , \mathcal{F}_{n,t} \right\}_{t=1}^n$ is a strictly \textit{martingale difference sequence} (MDS) with a positive definite covariance matrix $\mathbb{E} \left( \boldsymbol{\xi}_t \boldsymbol{\xi}_t^{\top} | \mathcal{F}_{n,t} \right)$ whose partial sums satisfy the invariance principle given by \eqref{ip} that permits the existence of asymptotic moment matrices. 
\end{assumption}

\newpage

\begin{assumption}
\label{assumption3}
We assume the existence of a threshold variable $q_t$ with continuous distribution function given by $F(.)$ and a positive continuous density function $f(.)$ such that $0 < f( \gamma ) < \bar{f} < \infty$ for all $\gamma \in \Gamma = \left[ \gamma_1, \gamma_2 \right]$, a suitable parameter space. \end{assumption}

Assumption \ref{assumption1} provides a condition for the presence of strictly stationary threshold variable $q_t$. Moreover, Assumption \ref{assumption2} states that the threshold variable $q_t$ is contemporaneously exogenous in the model and ensures that the multivariate invariance principle for the partial sum of the martingale difference array \eqref{model2} holds. Then, with Assumption \ref{assumption3} we impose the existence of a time-invariant continuous distribution function for the threshold variable, which ensures the existence of dense true threshold levels as the sample size increases. Furthermore, we can impose additional assumptions which ensure that the threshold variable induces diminishing effects which vanish asymptotically. We denote with $W(r, \lambda )$ to be a two-parameter Brownian motion on the topological space $(r,\lambda) \in [0,1]^2$ in a similar manner as in the paper of \cite{caner2001threshold}.

\medskip 

We define the following stochastic process\footnote{Notice that the particular stochastic process (i.e., notanionwise) can be seen as the solution of the stochastic differential equation: $dX_t = X_t d \widetilde{Z}_t + d Z_t$ such that 
$$
X_t = \mathsf{exp} \left\{ \widetilde{Z}_t - \frac{1}{2} \langle \widetilde{Z} \rangle_t  \right\} . 
\left( X_0 + \int_0^t \mathsf{exp} \left\{ ... \right\} \right)$$}
(see, \cite{lieberman2020hybrid})
\begin{align}
\label{Gprocess}
\frac{ \boldsymbol{x}_{ \floor{nr} } }{  \sqrt{n} } \Rightarrow  \boldsymbol{G}_{ c, \varphi }(r) 
&:= 
\mathsf{exp} \big\{ r \boldsymbol{C}_p + \boldsymbol{\varphi}^{\prime} \boldsymbol{B}_{ u_{ \varphi } }(r)  \otimes \boldsymbol{I}_p \big\} \left( \int_0^r  \mathsf{exp} \big\{ - s \boldsymbol{C}_p - \boldsymbol{\varphi}^{\prime} \boldsymbol{B}_{ u_{ \varphi } }(s) \big\} d\boldsymbol{B}_{ u_{x}  } (s) \right),
\end{align}
such that 
\begin{align}
\label{process2}
\mathbf{G}_{c , \varphi }(s) = \big[ G_{ c_1 , \varphi_1 }(s), ...., G_{ c_p, \varphi_p }(s) \big]
\end{align}

\medskip

\begin{assumption}
\label{assumption4}
The following condition holds
\begin{align}
\int_0^1 \mathbf{G}_{c , \varphi }(s) \mathbf{G}^{\prime}_{ c , \varphi }(s) ds > 0.
\end{align}
is a positive-definite stochastic matrix which holds asymptotically. 
\end{assumption}

The nonlinear stochastic process above along with Assumption \eqref{assumption4} is instrumental when developing the asymptotic theory of our test statistics. Notice that since our proposed framework allows for the existence of hybrid stochastic local unit roots the diffusion processes have a nonlinear structure to capture the additional features. When all $a_j$ and $c_j$ are zero, then we have a threshold model with unit root regressors. Therefore, our proposed framework is more general and encompasses cases such as near-unit root or integrated regressors, which can be modelled in the case of LUR specification.
Notice that the threshold indicator has the same lag as the predictor in the predictive regression model (which is one lag less than the regressand).

\newpage

\section{Asymptotic Theory and Testing Hypotheses}
\label{Section3}

All random elements are defined with a suitable probability space, denoted by $\left( \Omega, \mathcal{F}, \mathbb{P} \right)$. Throughout the paper, all limits are taken as $n \to \infty$, where $n$ is the sample size. The symbol $"\Rightarrow"$ is used to denote the weak convergence of the associated probability measures as $n \to \infty$. The symbol $\overset{d}{\to}$ denotes convergence in distribution and $\overset{\text{plim}}{\to}$ denotes convergence in probability, within the probability space (see, \cite{billingsley1968convergence}). For further details on the martingale approximation results see \cite{hall1981martingale}.     

\subsection{Asymptotic Properties}

In this section, we examine the asymptotic theory of the OLS based estimator and the corresponding IVX based estimator for the threshold predictive regression as well as for the corresponding estimator of the threshold effect\footnote{Notice that the threshold effect can be identified within a range of values say, $-1/2 < \uptau < 1/2$. For $\uptau = 1/2$ then, the nuisance parameter is at the bound of this neighbourhood and thus there is weakly identification of the threshold variable.}. The following results provide the convergence rates for the consistent identification of the true threshold effect.   

\begin{lemma}
\label{lemma1}
Under Assumptions \ref{assumption1}-\ref{assumption4}, we have that $\widehat{\gamma} - \gamma_0 = \mathcal{O}_p(1)$. 
\end{lemma}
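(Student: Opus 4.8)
The plan is to establish the $\mathcal{O}_p(1)$ convergence rate of $\widehat{\gamma}$ by a standard argument-max / concentration-inequality approach adapted to the locally explosive regressor setting. First I would write the concentrated sum of squares $\mathrm{SSR}(\gamma)$ in terms of the profiled residuals and decompose $\mathrm{SSR}(\gamma) - \mathrm{SSR}(\gamma_0)$ into a deterministic drift term that identifies $\gamma_0$ plus stochastic noise terms. The key is that, under the maintained assumptions, the drift is bounded below by a quantity that grows in the ``distance'' $|\gamma - \gamma_0|$ measured through the distribution function $F$ of $q_t$, while the noise is of smaller order uniformly over $\Gamma$. Because the threshold indicator $I(q_{t-1}\le\gamma)$ depends only on the strictly stationary, ergodic mixing sequence $\{q_t\}$ (Assumption~\ref{assumption1}), the empirical process indexed by $\gamma$ behaves like a classical one, and I can separate the handling of the $q_t$-indexed pieces from the nonstationary regressor pieces.

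Second, I would carry out the scaling bookkeeping: since $x_{t-1}/\sqrt n \Rightarrow \mathbf{G}_{c,\varphi}(\cdot)$ by \eqref{Gprocess}, sums of the form $n^{-2}\sum_t x_{t-1}x_{t-1}^{\top} I(q_{t-1}\le\gamma)$ converge jointly (in $\gamma$) to $\int_0^1 \mathbf{G}_{c,\varphi}\mathbf{G}_{c,\varphi}^{\top}\,\mathbf 1\{F^{-1}(\text{level})\}$-type stochastic integrals, which by Assumption~\ref{assumption4} are a.s.\ positive definite; cross terms with $u_{yt}$ are of lower order after the same normalization. Collecting these, $\mathrm{SSR}(\gamma)-\mathrm{SSR}(\gamma_0)$, suitably normalized, is bounded below by $c_1 |F(\gamma)-F(\gamma_0)|$ minus an $o_p(1)$-type remainder that is uniform on compact subsets bounded away from $\gamma_0$. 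Combining this lower bound with the definition of $\widehat\gamma$ as the minimizer yields that $F(\widehat\gamma)-F(\gamma_0)=\mathcal O_p(n^{-a})$ for the appropriate rate $a>0$, and then continuity and strict positivity of the density $f$ on $\Gamma$ (Assumption~\ref{assumption3}) transfers this to $\widehat\gamma - \gamma_0 = \mathcal O_p(1)$ (indeed the sharper $o_p(1)$ consistency, from which the $\mathcal O_p(1)$ claim is immediate). I would phrase the final step so that the stated $\mathcal O_p(1)$ bound follows even in the ``diminishing threshold effect'' regime discussed in the text, where the lower-bound constant itself shrinks with $n$ but does so slowly enough to still dominate the noise.

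The main obstacle I anticipate is controlling the stochastic terms uniformly over $\gamma\in\Gamma$ in the presence of genuinely explosive ($c_i>0$) regressors: unlike the stationary or near-unit-root cases, the partial sums of $x_t$ carry exponential trends, so one must be careful that the $\sqrt n$ normalization in \eqref{Gprocess} is the right one for the locally explosive specification here (the stochastic component keeps the process from exploding in the usual sense), and that the weak convergence holds \emph{jointly} with the $q_t$-indexed empirical process. I would handle this by invoking Assumption~\ref{assumption2} (the MDS/invariance-principle structure for $\boldsymbol\xi_t$, which includes $u_{\varphi t}$ driving the stochastic local unit root), together with a maximal inequality for the $\gamma$-indexed sums justified by the mixing condition in Assumption~\ref{assumption1}, to get stochastic equicontinuity in $\gamma$; the exogeneity of $q_t$ asserted after Assumption~\ref{assumption2} decouples the indicator process from $\mathbf{G}_{c,\varphi}$ in the limit. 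A secondary technical point is making the identification inequality quantitative when $\gamma$ is close to $\gamma_0$ — this is where positivity of $f$ on all of $\Gamma$ and Assumption~\ref{assumption4} do the essential work, ensuring the limiting criterion has a well-separated minimum at $\gamma_0$.
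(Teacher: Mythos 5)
Your proposal follows essentially the same route as the paper's proof: decompose $\mathrm{SSR}(\gamma)-\mathrm{SSR}(\gamma_0)$ into an identifying drift term plus noise (the paper's $\mathcal{S}_{n1},\mathcal{S}_{n2},\mathcal{S}_{n3}$), use the joint weak convergence of the $\gamma$-indexed sums involving $x_{t-1}/\sqrt{n}\Rightarrow\mathbf{G}_{c,\varphi}$ together with the uniform-in-$\gamma$ Glivenko--Cantelli-type step to show the normalized drift converges to $\bigl(F(\gamma_0)-F(\gamma_0)F(\gamma)^{-1}F(\gamma_0)\bigr)\,\delta_0^{\top}\!\int_0^1\mathbf{G}_{c,\varphi}\mathbf{G}_{c,\varphi}^{\top}\,ds\,\delta_0$, which is sign-definite away from $\gamma_0$ by Assumption \ref{assumption3} and \ref{assumption4}, and conclude from the argmin property of $\widehat{\gamma}$ under the $n^{-\uptau}$ diminishing-threshold scaling. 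The approach and the role played by each assumption match the paper's argument.
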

The proof of Lemma \ref{lemma1} can be found in the Appendix. For the case of the stationary threshold model, consistency is proved by \cite{chan1993consistency}. A stronger large sample result for the identification  of the threshold effect is given by Lemma \ref{lemma2} below. 
\begin{lemma}
\label{lemma2}
Under Assumptions \ref{assumption1}-\ref{assumption4}, we have that 
\begin{align}
n^{ 2 (1- \uptau) } \left(  \widehat{\gamma} - \gamma_0 \right) \overset{ d }{ \to } \mathcal{H}_{c , \varphi } \mathcal{L},
\end{align}
where 
\begin{align}
\label{factor}
\mathcal{H}_{c , \varphi } = \frac{ \sigma_u^2 }{ \displaystyle  f(\gamma_0) \delta_0^{^{\top}} \left[ \int_0^1 \mathbf{G}_{c , \varphi }(s) \mathbf{G}^{^{\top}}_{c , \varphi }(s) ds \right] \delta_0 }
\end{align}
\end{lemma}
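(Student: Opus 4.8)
The plan is to recast the estimation of $\gamma$ as a shrinking-neighbourhood M-estimation problem, in the spirit of \cite{hansen2000sample} and \cite{caner2001threshold}, but with the stationary second-moment matrix of the regressors replaced by the random matrix $\int_0^1 \mathbf{G}_{c,\varphi}(s)\mathbf{G}_{c,\varphi}^{\top}(s)\,ds$ furnished by \eqref{Gprocess}. I would write the diminishing slope jump as $\delta_0 = \delta_{0,n}$ with $n^{\uptau}\delta_{0,n}\to\delta_0$ (the fixed vector appearing in \eqref{factor}) and $\uptau<1/2$, and reparametrise $\gamma = \gamma_0 + v\,n^{-2(1-\uptau)}$, so that $\widehat v := n^{2(1-\uptau)}(\widehat\gamma-\gamma_0)$ minimises $v\mapsto \text{SSR}\big(\gamma_0 + v\,n^{-2(1-\uptau)}\big)$; by Lemma \ref{lemma1} it suffices to work on $\{|v|\le K\}$ for each fixed $K$. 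First I would concentrate out $\theta$ and, using that two candidate partitions differ only on the ``sandwich'' index set $\mathcal{S}_n(v)=\{t: q_{t-1}\ \text{lies between}\ \gamma_0\ \text{and}\ \gamma_0+v\,n^{-2(1-\uptau)}\}$, obtain the expansion
\begin{align*}
\mathcal{Q}_n(v) := \text{SSR}\big(\gamma_0 + v\,n^{-2(1-\uptau)}\big) - \text{SSR}(\gamma_0) = M_n(v) - 2 N_n(v) + o_p(1)
\end{align*}
uniformly over $|v|\le K$, where $M_n(v) = \sum_{t\in\mathcal{S}_n(v)}\big(\delta_{0,n}^{\top}x_{t-1}\big)^2$ and $N_n(v) = \sum_{t\in\mathcal{S}_n(v)}\big(\delta_{0,n}^{\top}x_{t-1}\big)u_{yt}$, the remainder absorbing the cost of replacing $\theta$ by $\widehat\theta(\gamma)$ and the cross-regime contamination, shown $o_p(1)$ via consistency of $\widehat\theta(\cdot)$ and the standard predictive-regression rates built on \eqref{Gprocess}. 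Since $\mathcal{S}_n(v)$ has on the order of $n^{2\uptau-1}|v|$ elements and $\big(\delta_{0,n}^{\top}x_{t-1}\big)^2 = O_p\big(n^{1-2\uptau}\big)$, the scalings cancel so that $M_n$ and $N_n$ are $O_p(1)$ on compacts; the intercept component of the jump contributes only $O_p(n^{2\uptau-1})=o_p(1)$, which is why \eqref{factor} involves the slope jump $\delta_0$ alone.

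\textbf{Drift and noise.} Next I would identify the weak limits of $M_n$ and $N_n$ on $D[-K,K]$. For the drift, because $q_{t-1}$ enters only through the indicator of a shrinking interval around $\gamma_0$, an occupation-time/law-of-large-numbers argument for the strictly stationary mixing sequence $\{q_t\}$ (Assumptions \ref{assumption1} and \ref{assumption3}) replaces $\mathbf{1}\{t\in\mathcal{S}_n(v)\}$ by $f(\gamma_0)|v|\,n^{-2(1-\uptau)}$, after which $n^{-1}\sum_{t=1}^n\big(n^{-1/2}x_{t-1}\big)\big(n^{-1/2}x_{t-1}\big)^{\top}\Rightarrow \int_0^1 \mathbf{G}_{c,\varphi}\mathbf{G}_{c,\varphi}^{\top}$ (from \eqref{Gprocess} and the continuous mapping theorem) gives $M_n(v)\Rightarrow b\,|v|$ with $b = f(\gamma_0)\,\delta_0^{\top}\big(\int_0^1 \mathbf{G}_{c,\varphi}\mathbf{G}_{c,\varphi}^{\top}\big)\delta_0$, which is strictly positive a.s.\ by Assumption \ref{assumption4}. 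For the noise, since $(\delta_{0,n}^{\top}x_{t-1})\,\mathbf{1}\{t\in\mathcal{S}_n(v)\}$ is $\mathcal{F}_{n,t-1}$-measurable and $\{u_{yt}\}$ is a martingale difference (Assumption \ref{assumption2}), $N_n(\cdot)$ is a martingale array in $t$ with predictable quadratic variation $\sigma_u^2\,M_n(v)$ (here $\sigma_u^2:=\mathbb{E}(u_{yt}^2)$), and $\sigma_u^2\,M_n(v)\Rightarrow \sigma_u^2 b\,|v|$; a martingale functional central limit theorem (cf.\ \cite{hall1981martingale}), applied conditionally on the regressor path so that the limit is $\mathcal{F}$-mixed Gaussian, then yields $N_n(v)\Rightarrow \sqrt{\sigma_u^2 b}\,W(v)$ jointly with $M_n$, where $W$ is a two-sided standard Brownian motion independent of $\big(f(\gamma_0),\mathbf{G}_{c,\varphi}\big)$; the Lindeberg condition follows from the moment bounds underlying \eqref{ip} and the boundedness of $f$ on $\Gamma$ (Assumption \ref{assumption3}).

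\textbf{Passage to the limit.} Combining the two steps, $\mathcal{Q}_n(\cdot)\Rightarrow \mathcal{Q}_\infty(\cdot)$ in $D[-K,K]$ with $\mathcal{Q}_\infty(v) = b\,|v| - 2\sqrt{\sigma_u^2 b}\,W(v)$; this process is a.s.\ continuous, tends to $+\infty$ as $|v|\to\infty$, and has an a.s.\ unique minimiser which is $O_p(1)$, so sending $K\to\infty$ is legitimate and the argmax continuous mapping theorem gives $\widehat v \overset{d}{\to}\mathop{\mathrm{argmin}}_{v\in\mathbb{R}}\,\mathcal{Q}_\infty(v)$. A Brownian rescaling $v = (\sigma_u^2/b)\,w$ (using $W(cw)\overset{d}{=}\sqrt{c}\,W(w)$) shows $\mathop{\mathrm{argmin}}_{v\in\mathbb{R}}\mathcal{Q}_\infty(v) = (\sigma_u^2/b)\,\mathop{\mathrm{argmax}}_{w\in\mathbb{R}}\big\{W(w)-\tfrac12|w|\big\} = \mathcal{H}_{c,\varphi}\,\mathcal{L}$, since $\sigma_u^2/b = \mathcal{H}_{c,\varphi}$ by \eqref{factor} and $\mathcal{L} := \mathop{\mathrm{argmax}}_{w\in\mathbb{R}}\big\{W(w)-\tfrac12|w|\big\}$.

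\textbf{Main obstacle.} The hard part will be the joint weak convergence step for the noise term: $N_n(v)$ is a two-index object — a partial sum in $t$ that is simultaneously localised in the threshold coordinate $v$ — driven by the nonstationary regressors $x_t$ whose own limit $\mathbf{G}_{c,\varphi}$ is random, so one must show that the ``local time'' contributed by $\mathbf{1}\{q_{t-1}\in\mathcal{S}_n(v)\}$ factors cleanly into the scalar $f(\gamma_0)$ while the regressor/error part supplies $\sigma_u^2\,\delta_0^{\top}\big(\int_0^1 \mathbf{G}_{c,\varphi}\mathbf{G}_{c,\varphi}^{\top}\big)\delta_0$. This requires a stable/conditional CLT compatible with the $\mathcal{F}$-conditional Gaussianity of $\mathbf{G}_{c,\varphi}$, together with tightness of $N_n(\cdot)$ uniform in $v$ on compacts; in the locally explosive case ($c_i>0$) the latter hinges on verifying that the exponential factors in \eqref{Gprocess} are a.s.\ bounded on $[0,1]$, so that the envelopes of $M_n$ and $N_n$ over $|v|\le K$ stay $O_p(1)$. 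Establishing this tightness and the a.s.\ uniqueness of $\mathop{\mathrm{argmin}}\mathcal{Q}_\infty$ are the remaining technical hurdles; the rest is the Hansen-type continuous-mapping argument.
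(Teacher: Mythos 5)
Your proposal follows essentially the same route as the paper's proof: the paper likewise reparametrises with $\kappa_n = n^{2-2\uptau}$, decomposes $\mathrm{SSR}(\gamma_0+\upupsilon/\kappa_n)-\mathrm{SSR}(\gamma_0)$ into a drift term, a martingale noise term, and negligible estimation-error terms (its $\mathcal{S}_{n1},\dots,\mathcal{S}_{n4}$), obtains the drift $|\upupsilon|f(\gamma_0)\,\delta_0^{\top}\bigl[\int_0^1\mathbf{G}_{c,\varphi}\mathbf{G}_{c,\varphi}^{\top}\bigr]\delta_0$ via a Taylor expansion of $F$, and identifies the noise as a two-sided Brownian motion with the matching variance, before invoking the argmin continuous mapping theorem. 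Your write-up is, if anything, slightly more explicit about the final Brownian rescaling to $\mathcal{H}_{c,\varphi}\mathcal{L}$ and about the tightness/stable-convergence issues that the paper leaves implicit.
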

and $\mathcal{L} := \underset{ r \in ( - \infty, + \infty ) }{ \text{arg max} } \ \Lambda(r)$ with $\Lambda(r)$ denotes a two-sided Brownian motion  (see, the paper of \cite{khoshnevisan1996uniform} for more details) such that 
\begin{align}
\Lambda(r) = 
\begin{cases}
W_1(r)  - \frac{1}{2} r , & \text{if} \ r > 0,
\\
0, & \text{if} \ r = 0,
\\
W_2(-r) - \frac{1}{2} |r|, & \text{if} \ r < 0.
\end{cases}
\end{align}
where $W_1(r)$ and  $W_2(r)$ are two independent standard Brownian motions on $[0, \infty )$. 

\newpage 

\begin{remark}
Notice that Lemma \ref{lemma1} and Lemma \ref{lemma2} show the consistency and the limiting distribution of the threshold estimator. The convergence rate of $\widehat{\gamma}$ only depends on the diminishing rate of the threshold effect, $\uptau$, and there is no dependence to nuisance parameters such as the coefficient of persistence $c_k$ and the stochastic term, $\varphi_k$. Moreover, the limiting distribution is similar to the asymptotic result given by \cite{hansen2000sample} and \cite{chen2015robust} but in our case the scale factor is given by $\mathcal{H}_{c , \phi }$ as in expression \eqref{factor}. 
\end{remark}

\begin{corollary}
Under Assumptions \ref{assumption1}-\ref{assumption4}, and $c = 0$, we have that 
\begin{align}
n^{ 2(1 - \uptau) } \left(  \widehat{\gamma} - \gamma_0 \right) \overset{ d }{ \to } \mathcal{H}_{\varphi } \mathcal{L}, \ \ \text{with} \ \ \mathcal{H}_{\varphi } = \frac{ \sigma_u^2 }{ \displaystyle  f(\gamma_0) \delta_0^{\top} \left[ \int_0^1 \mathbf{G}_{\varphi }(s) \mathbf{G}^{\top}_{\varphi }(s) ds \right] \delta_0 }
\end{align}
where $\mathbf{G}_{\varphi}(s) = \big[ G_{\varphi_1 }(s), ...., G_{ \varphi_p }(s) \big]$ and $G_{ \varphi_k }(s) = \displaystyle e^{ \varphi_k^{\top} B_{\upepsilon_k}(s) } \left( \int_0^s  e^{ - \varphi_k^{\top} B_{\upepsilon_k}(r) } dB_{v_k} (r) \right)$.
\end{corollary}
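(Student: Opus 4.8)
The plan is to read off the statement as the $c=0$ specialization of Lemma~\ref{lemma2}, since the localizing matrix $\boldsymbol{C}_p$ enters that result only through the limit process $\mathbf{G}_{c,\varphi}$ that appears in the scale factor $\mathcal{H}_{c,\varphi}$ of \eqref{factor}. First I would revisit the proof of Lemma~\ref{lemma2} and verify that it nowhere invokes the sign restriction $c_j>0$: the consistency of $\widehat{\gamma}$ from Lemma~\ref{lemma1}, the $n^{2(1-\uptau)}$ localization of the concentrated sum of squares about $\gamma_0$, and the representation of the limit as the $\arg\max$ of the two-sided Brownian motion $\Lambda$ all go through verbatim once Assumption~\ref{assumption4} is in force. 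Hence the entire argument applies with $\boldsymbol{C}_p=\mathsf{diag}(0,\dots,0)$.

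Next I would substitute $c=0$ into \eqref{Gprocess}. The factors $\mathsf{exp}\{r\boldsymbol{C}_p\}$ and $\mathsf{exp}\{-s\boldsymbol{C}_p\}$ collapse to the identity, leaving
\begin{align*}
\mathbf{G}_{0,\varphi}(r)=\mathsf{exp}\big\{\boldsymbol{\varphi}^{\prime}\boldsymbol{B}_{u_\varphi}(r)\otimes\boldsymbol{I}_p\big\}\left(\int_0^r\mathsf{exp}\big\{-\boldsymbol{\varphi}^{\prime}\boldsymbol{B}_{u_\varphi}(s)\big\}\,d\boldsymbol{B}_{u_x}(s)\right),
\end{align*}
which, read coordinatewise through \eqref{process2} and identifying $B_{\upepsilon_k}$ and $B_{v_k}$ with the $k$-th components of $\boldsymbol{B}_{u_\varphi}$ and $\boldsymbol{B}_{u_x}$, is exactly $\mathbf{G}_{\varphi}(s)=\big[G_{\varphi_1}(s),\dots,G_{\varphi_p}(s)\big]$ with $G_{\varphi_k}(s)=e^{\varphi_k^{\top}B_{\upepsilon_k}(s)}\big(\int_0^s e^{-\varphi_k^{\top}B_{\upepsilon_k}(r)}\,dB_{v_k}(r)\big)$. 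Inserting this into \eqref{factor} turns $\mathcal{H}_{c,\varphi}$ into $\mathcal{H}_{\varphi}=\sigma_u^2\big/\big(f(\gamma_0)\,\delta_0^{\top}\big[\int_0^1\mathbf{G}_{\varphi}(s)\mathbf{G}^{\top}_{\varphi}(s)\,ds\big]\delta_0\big)$, while $\mathcal{L}$ is untouched, which is the claimed weak limit.

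The one genuine point to check --- and the mild obstacle I would flag --- is that Assumption~\ref{assumption4} is non-vacuous at $c=0$, i.e.\ that $\int_0^1\mathbf{G}_{\varphi}(s)\mathbf{G}^{\top}_{\varphi}(s)\,ds$ remains a.s.\ positive definite so that $\mathcal{H}_{\varphi}$ is finite and well defined. I would establish this by noting that each $G_{\varphi_k}$ is a nondegenerate $\mathcal{F}$-conditional Gaussian (mixed-normal) process with strictly positive conditional variance on $(0,1]$, and that the positive-definiteness of $\boldsymbol{\Sigma}_{xx}$ in Assumption~\ref{assumption2} makes the driving coordinate Brownian motions linearly independent, so that no nontrivial combination $\delta_0^{\top}\mathbf{G}_{\varphi}$ vanishes on a set of positive Lebesgue measure; this is simply the $c=0$ instance of the positivity already assumed for $\mathbf{G}_{c,\varphi}$. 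With that settled the corollary is immediate.
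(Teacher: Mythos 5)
Your proposal is correct and matches the paper's (implicit) argument: the corollary is stated without separate proof precisely because it is the $c=0$ specialization of Lemma \ref{lemma2}, with $\mathbf{G}_{c,\varphi}$ collapsing to $\mathbf{G}_{\varphi}$ in the scale factor \eqref{factor} while $\mathcal{L}$ is unaffected. Your additional check that Assumption \ref{assumption4} remains non-vacuous at $c=0$ (so $\mathcal{H}_{\varphi}$ is well defined) is a sensible piece of extra care, but does not change the route.
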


\medskip

\begin{corollary}
Under Assumptions \ref{assumption1}-\ref{assumption4}, and $c = 0$, we have that 
\begin{align}
n^{ 2(1 - \uptau) } \left(  \widehat{\gamma} - \gamma_0 \right) \overset{ d }{ \to } \mathcal{H}_{c} \mathcal{L}, \ \ \text{with} \ \ \mathcal{H}_{c} = \frac{ \sigma_u^2 }{ \displaystyle  f(\gamma_0) \delta_0^{\top} \left[ \int_0^1 \mathbf{G}_{c}(s) \mathbf{G}^{\top}_{c}(s) ds \right] \delta_0 }
\end{align}
where $\mathbf{G}_{c}(s) = \big[ G_{ c_1 }(s), ...., G_{ c_p}(s) \big]$ and $G_{ \phi_k }(s) = \displaystyle e^{ s c_k } \left( \int_0^s  e^{ - rc } dB_{v_k} (r) \right)$.
\end{corollary}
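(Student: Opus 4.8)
The plan is to read off the claim as the specialization of Lemma \ref{lemma2} to the sub-model in which the stochastic departure from unity is switched off, i.e. $\boldsymbol{\varphi}=0$ (as the form of the limit functional $\mathbf{G}_{c}$ in the statement indicates), so that in \eqref{LSTUR} one has $\boldsymbol{R}_{nt}=\boldsymbol{I}_p+\boldsymbol{C}_p/n$ for every $t$ and $\boldsymbol{x}_t$ is a purely near-explosive array with $\boldsymbol{C}_p=\mathsf{diag}(c_1,\dots,c_p)$, $c_k>0$. The whole argument of Lemma \ref{lemma2} is then re-used after the substitution $\boldsymbol{\varphi}\mapsto 0$; the only genuine work is to confirm that the hypotheses survive this substitution and to track what happens to the limiting objects.

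First I would check that the limiting functional \eqref{Gprocess} collapses to the advertised one at $\boldsymbol{\varphi}=0$: the factors $\mathsf{exp}\{\boldsymbol{\varphi}^{\top}\boldsymbol{B}_{u_{\varphi}}(\cdot)\otimes\boldsymbol{I}_p\}$ inside both exponentials become the identity, leaving
\[
\boldsymbol{G}_{c,0}(r)\;=\;\mathsf{exp}\{r\boldsymbol{C}_p\}\int_0^r \mathsf{exp}\{-s\boldsymbol{C}_p\}\,d\boldsymbol{B}_{u_x}(s),
\]
which, since $\boldsymbol{C}_p$ is diagonal, is exactly the vector $\mathbf{G}_c(s)=\big[G_{c_1}(s),\dots,G_{c_p}(s)\big]$ of scalar near-explosive Ornstein--Uhlenbeck processes $G_{c_k}(r)=e^{rc_k}\int_0^r e^{-sc_k}\,dB_{u_{x,k}}(s)$ displayed in the statement. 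Consequently $n^{-1/2}\boldsymbol{x}_{\floor{nr}}\Rightarrow\mathbf{G}_c(r)$ is just \eqref{Gprocess} evaluated at $\boldsymbol{\varphi}=0$, so no separate invariance principle is needed.

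Next I would verify that Assumptions \ref{assumption1}--\ref{assumption4} still hold for this sub-model; Assumptions \ref{assumption1}--\ref{assumption3} concern only the threshold variable $q_t$ and the innovation array $\boldsymbol{\xi}_t$ and are unaffected, so the only non-immediate point is Assumption \ref{assumption4}. Here each $G_{c_k}$ is a.s. not identically zero on $[0,1]$ and, because $\boldsymbol{B}_{u_x}$ has positive-definite covariance $\boldsymbol{\Sigma}_{xx}$ from \eqref{ip}, the family $\{G_{c_k}\}_{k=1}^{p}$ is a.s. linearly independent in $L^2[0,1]$; hence $\int_0^1 \mathbf{G}_c(s)\mathbf{G}_c^{\top}(s)\,ds$ is a.s. positive definite, $\delta_0^{\top}\big[\int_0^1 \mathbf{G}_c\mathbf{G}_c^{\top}\big]\delta_0>0$ a.s. for $\delta_0\neq 0$, and the scale constant $\mathcal{H}_c$ — which is \eqref{factor} with $\mathbf{G}_{c,\varphi}$ replaced by $\mathbf{G}_c$ — is a.s. finite and well-defined. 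With the hypotheses in place, the proof of Lemma \ref{lemma2} applies verbatim with $\mathbf{G}_{c,\varphi}$ replaced by $\mathbf{G}_c$: after the change of variables $\gamma=\gamma_0+r\,n^{-2(1-\uptau)}$, the recentred profiled criterion $\mathrm{SSR}(\gamma)-\mathrm{SSR}(\gamma_0)$ converges weakly, jointly in $r$, to a two-sided Brownian motion with drift whose slope and diffusion coefficient are built only from $f(\gamma_0)$, $\delta_0$, $\sigma_u^2$ and the Gram matrix of $\mathbf{G}_c$ — none of which involves $\boldsymbol{\varphi}$ once it is zero — and a Brownian-scaling plus argmax continuous-mapping argument (using tightness and the a.s. uniqueness of $\arg\max\Lambda$, cf. \cite{khoshnevisan1996uniform}) pins the constant to $\mathcal{H}_c$ and yields $n^{2(1-\uptau)}(\widehat\gamma-\gamma_0)\overset{d}{\to}\mathcal{H}_c\,\mathcal{L}$.

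I expect the only delicate point — the one place where something must be re-examined rather than simply inherited — to be the non-degeneracy of $\int_0^1\mathbf{G}_c\mathbf{G}_c^{\top}\,ds$: one has to be sure that switching off the stochastic term, together with the explosive growth $e^{rc_k}$ carried by $c_k>0$, does not make the limiting Gram matrix singular or ill-behaved, since it is precisely this matrix that defines $\mathcal{H}_c$. Because the $c_k$ are fixed constants and the Ornstein--Uhlenbeck integral is a.s. finite on the compact interval $[0,1]$, this is harmless; everything else is the routine substitution $\boldsymbol{\varphi}\mapsto 0$ in the proof of Lemma \ref{lemma2}.
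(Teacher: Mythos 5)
Your proposal is correct and matches the paper's (implicit) treatment: the corollary is stated without a separate proof as the direct specialization of Lemma \ref{lemma2}, and you carry out exactly that specialization --- correctly reading the hypothesis ``$c=0$'' as a slip for $\boldsymbol{\varphi}=0$, since the displayed limit functional $\mathbf{G}_{c}(s)$ with factors $e^{sc_k}$ is \eqref{Gprocess} evaluated at $\boldsymbol{\varphi}=0$. Your additional check that the Gram matrix $\int_0^1\mathbf{G}_{c}(s)\mathbf{G}_{c}^{\top}(s)\,ds$ remains a.s.\ positive definite (so that $\mathcal{H}_{c}$ is well defined) is the one point that genuinely needs verifying beyond the substitution, and your argument for it via the positive definiteness of $\boldsymbol{\Sigma}_{xx}$ is sound.
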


\medskip

\subsection{IVX instrumentation}
\label{Section.IVX}

The novelty of our framework is that we consider the instrumental variable $Z_{tn}$ which is based on the IVX methodology proposed by \cite{phillipsmagdal2009econometric}. The IVX instrument is constructed as below 
\begin{align}
z_{tn} = \sum_{j=0}^{t-1} \left( 1 - \frac{c_z}{n^{\upgamma_z}}  \right)^j \big( x_{t-j} - x_{t-j-1} \big), \ \ \ \ \text{with} \ c_z > 0 , \ \upgamma_z \in (0,1).
\end{align}  
The philosophy of the IVX instrumentation is that it induces a mildly integrated regressor which can control the unknown degree of persistence in the original regressor. In particular, in the case of hybrid stochastic local unit roots by allowing $\upgamma_z \neq 1$ which cover cases such as $\upgamma_z > 1$, $\upgamma_z \in (0,1)$ or $\upgamma_z < 1$, the IVX filter can achieve this property which is found to ensure weakly convergence to a mixed Gaussian distribution of the IVX estimator (see, \cite{kostakis2015Robust}). It remains to verify that this asymptotic property still holds in our setting. Then, this will allow us to derive standard Brownian functionals for the asymptotic results of the tests of the next section even under the presence of threshold effects and nonlinear stochastic terms. The convenience of deriving an analytical known form for the corresponding limiting distributions of the tests is that we can easily obtain critical values and conduct statistical inference. 

\newpage  

Next, we shall consider the IVX instrumentation within our modeling environment. Specifically, in the settings of SLUR IVX instrumentation has to be applied in a different manner than in the classical nonstationary autoregressive model, due the fact that the specification of the autocorrelation coefficient has a more complicated form. This is expressed in the following form
\begin{align}
\widetilde{z}_{t-1, t^{\prime}} = \sum_{j=1}^{t-1} R_{nz,t^{\prime}}^{t-j-1}  \left( u_{xj} + \frac{ \boldsymbol{C}_p }{n } +  \frac{1}{ \sqrt{n} } \langle \boldsymbol{\varphi}, \boldsymbol{u}_{\varphi  1} \rangle \otimes \boldsymbol{I}_p x_{j-1}  +  \frac{1}{ 2n } \langle \boldsymbol{\varphi}, \boldsymbol{u}_{\varphi  1} \rangle^2 \otimes \boldsymbol{I}_p x_{j-1}  \right)  
\end{align}

In practise the above instrumentation procedure is applied and we simplify it to the following terms
\begin{align}
\widetilde{z}_{t-1} 
= 
z_{t-1} + \frac{\boldsymbol{C}_p }{n} \eta_{n,t-1}^{(1)} + \frac{1}{ \sqrt{n} } \eta_{n,t-1}^{(2)} + \frac{1}{2n} \eta_{n,t-1}^{(3)} 
\end{align}
where the three terms above are defined as below
\begin{align}
\eta_{n,t-1}^{(1)} &= \sum_{j=1}^{t-1} \boldsymbol{R}_{nz}^{t-j-1} \boldsymbol{x}_{j-1},
\\
\eta_{n,t-1}^{(2)} &= \sum_{j=1}^{t-1} \boldsymbol{R}_{nz}^{t-j-1} \left( \boldsymbol{\varphi}^{\prime} \boldsymbol{u}_{aj} \right) \boldsymbol{x}_{j-1},
\\
\eta_{n,t-1}^{(3)} &= \sum_{j=1}^{t-1} \boldsymbol{R}_{nz}^{t-j-1}  \left( \boldsymbol{\varphi}^{\prime} \boldsymbol{u}_{aj} \right)^2 \boldsymbol{x}_{j-1}.
\end{align}

\begin{example}
Consider the matrix format of the second term above (with $(p = d)$)
\begin{align*}
\eta_{n,t-1}^{(2)} 
&= 
\sum_{j=1}^{t-1} \boldsymbol{R}_{nz}^{t-j-1} \left( \boldsymbol{\varphi}^{\prime} \boldsymbol{u}_{ \varphi j} \right) \boldsymbol{x}_{j-1}    
\\
&= 
\sum_{j=1}^{t-1} \boldsymbol{R}_{nz}^{t-j-1} \big( \varphi_1,...,  \varphi_d \big) \big( u_{ \varphi 1, t},..., u_{ \varphi d, t} \big) 
\end{align*}    
\end{example}
Another example, is to consider the case in which $p = d = 1$ (single SLUR regressor in the model). Then, it holds that
\begin{align}
\boldsymbol{\rho}_{nt} = \left( 1 + \frac{c}{n} \right)  + \frac{1}{\sqrt{n}} \varphi_1 \boldsymbol{u}_{at} + \frac{1}{2n} \left( \varphi_1 \boldsymbol{u}_{at} \right)^2.   
\end{align}

Notice that the above representation doesn't imply that the autocorrelation coefficient is estimated dynamically within a rolling window but that its value when the DGP is constructed is estimated at each time series observation of the full sample.  Furthermore, an IVX instrumentation is also applied to the error term $\boldsymbol{u}_{\varphi t}$ such that 
\begin{align}
Z_{n \varphi, t} := \frac{1}{ n^{ \frac{\upgamma_z}{2} } } \sum_{j=1}^t \boldsymbol{\varphi}^{\prime} \boldsymbol{u}_{\varphi j} \rho_{nz}^{t-j} \Rightarrow Z_{\varphi} =_d \mathcal{MN} \left( 0, \frac{ \boldsymbol{\varphi}^{\prime} \boldsymbol{\Omega}_{ \varphi \varphi } \boldsymbol{\varphi} }{ - 2c_z  } \right). 
\end{align}

\newpage  

\begin{remark}
Notice that the above result holds due to the R-mixing condition. Specifically, the R-mixing condition is employed to establish weak convergence of these partial sum functionals in the $D[0,1]$ space equipped with the $J_1$ topology. In particular, the R-mixing property of the weak convergence means that the random element $Z_{n\varphi, t}$ is asymptotically independent of all events $E \in \mathcal{F}$, as $n \to \infty$ which is expressed as below
\begin{align}
\mathbb{P} \big[  \big( Z_{n \varphi, t} \in . \big) \cap E \big] \to \mathbb{P} \big[  \big( Z_{n \varphi, t} \in . \big) \big] \mathbb{P} \left[ E \right].  
\end{align}
\end{remark}

\bigskip

Then, the main result regarding the IVX estimator implies that a Mixed Gaussianity assumption holds in large samples such that
\begin{align}
n^{ \frac{ 1 + \upgamma_z }{2} } \left( \widehat{\boldsymbol{\beta}}_1^{ivx} - \boldsymbol{\beta}_1 \right) \sim_{a}  \mathcal{MN} \left( \boldsymbol{0}_{ \textcolor{red}{n \times 1 } }, \mathbb{V}^{ivx} \right) 
\end{align}
where $\mathbb{V}^{ivx} \in \mathbb{R}^{ p \times p }$.  

\begin{proof}
To prove the above result consider the following analytical expression. 

Define with $\boldsymbol{R}_n  := \left( \boldsymbol{I}_p + \frac{ \boldsymbol{C} }{n} \right)$ and let the vector of regressors being expressed as below
\begin{align}
x_t = \big( \boldsymbol{R}_n x_{t-1} + u_{xt} \big) + \frac{1}{ \sqrt{n} } \varphi u_{\varphi t} x_{t-1} + \frac{1}{2n} \left( \varphi u_{\varphi t} \right)^2 x_{t-1}.    
\end{align}
Then, it can be proved that the following result holds
\begin{align}
\frac{1}{ n^{ 1 + \upgamma_z } } \sum_{t=1}^n \widetilde{z}_{t-1} \underline{x}_{t-1}^{\prime} = O_p(1).    
\end{align}
Specifically for LSTUR the following limit result holds
\begin{align*}
\frac{1}{ n^{ 1 + \upgamma_z } } \sum_{t=1}^n x_{t-1} \widetilde{z}_{t-1}^{\prime} \to_d \mathbb{V}_{xz} := \frac{-1}{c_z} \left( \int_0^1 G_{c, \varphi} (r) dB_x^{\prime} (r) + \Omega_{xx}  + \int_0^1 G_{c, \varphi} (r) \left( \varphi^{\prime}  \Omega \varphi \right) dr \right)      
\end{align*}
Next consider the second term of the $x_t$ expression, we obtain that
\begin{align*}
\frac{1}{ n^{ \frac{1 + \upgamma_z }{2} } } \eta_{n,t-1}^{(2)} 
&= 
\left( \frac{1}{ n^{ \frac{\upgamma_z }{2} } } \sum_{j=1}^{t-1}  \left( \boldsymbol{\varphi}^{\prime} \boldsymbol{u}_{aj} \right)\boldsymbol{R}_{nz}^{t-j-1} \right) \left( \frac{1}{ \sqrt{n} }  \boldsymbol{x}_{j-1} \right)  + \mathcal{O}_p \left( \frac{1}{ n^{ \frac{1 - \upgamma_z }{2} } }  \right)
\\
&=
Z_{n \varphi, t- 1} \left( \frac{1}{ \sqrt{n} }  \boldsymbol{x}_{j-1} \right)  + \mathcal{O}_p \left( \frac{1}{ n^{ \frac{1 - \upgamma_z }{2} } }  \right).
\end{align*}
(incomplete proof)
In other words, this specification implies that there is an additional effect in each regressor included in the system. Moreover, we need to determine the order of convergence for all the three terms. 

\end{proof}

\newpage  

\subsection{Testing for Nonlinearity and Predictability}

In this Section, we focus on developing two tests: (i) testing for linearity and (ii) testing for Joint Nonlinearity and Predictability based on the threshold predictive regression model described above. Therefore, we can reformulate the threshold model as below
\begin{align}
\label{form}
y = \alpha + \beta x + X_2 \eta + u
\end{align}
where $\alpha = \alpha_1$, $\beta = \beta_1$ and $\eta = \left( \alpha^{*}, \delta^{*} \right)^{\top}$ with $\alpha^{*} = \alpha_2 - \alpha_1$ and $\delta^{*} = \beta_1 - \beta_2$.  Then, we can observe that the threshold effect diminishes when $\eta = 0$. Therefore, under the null hypothesis for a fixed $\gamma \in \Gamma = \left[ \gamma_1, \gamma_2 \right]$ the Wald statistic has the following form 
\begin{align}
\mathcal{W}_n \left( \gamma \right) = \hat{\eta}^{\top} \big[ X_{\gamma}^{\top} \left( I_n - P_x \right) X_{\gamma} \big] \hat{\eta} \big/ \hat{\sigma}^2_u. 
\end{align}
where $P_x$ the projection matrix of $x_t$ and $I_n$ an $n \times n$ identity matrix. 

We define the corresponding supremum functional as in \cite{caner2001threshold}, \cite{pitarakis2008comment} and \cite{gonzalo2012regime, gonzalo2017inferring} such that 
\begin{align}
\label{test1}
\mathcal{W}_n^{*} := \underset{ \gamma \in \left[ \gamma_1, \gamma_2 \right]   }{ \text{sup} } \mathcal{W}_n \left( \gamma \right). 
\end{align}
We aim to derive the asymptotic distribution of the test statistic given by \eqref{test1}, under the null hypothesis $\mathbb{H}_0^{(1)} : \eta = 0$ and show that in the special case when $c > 0$ and $\phi = 0$, then the limit theory reduced to the asymptotic result proved by Proposition 1 of \cite{gonzalo2012regime}. Due to the presence of nuisance parameter identified only under the alternative hypothesis, we follow \cite{davies1977hypothesis} and \cite{hansen1996inference}. 

Therefore, in order to test for both non-linearity and predictability we employ the supremum functional with the unknown threshold variable being within the range of the values $\gamma_1$ and $\gamma_2$. Therefore, the estimation procedure involves scanning within this window and applying the maximizing to obtain the unknown threshold variable. When we obtain an estimate for the threshold variable, then the estimation procedure requires to substitute this estimate and then estimate the model parameters\footnote{A detailed description of the procedure can be found in the book of \cite{terasvirta2010modelling}. Moreover, \cite{tong1980thresholdAR}}.

Furthermore, related to the assumptions of the model as also argued in \cite{gonzalo2012regime} the dependence assumptions such as for example how the correlation structure between the threshold variable and the innovations of the predictive regression can affect the asymptotic theory of the predictability tests. For instance, imposing the assumption that there is no correlation\footnote{In particular, assuming certain correlation structure between the threshold variable and the innovations of the predictive regression model implies that predictability is conditioned on whether in that certain period of time, there is a high correlation between these two random quantities. } between these two quantities, provides an equivalent assumption of having an exogenous determination of the threshold variable.

\newpage 

\begin{theorem}
\label{theorem1}
Under the null hypothesis $\mathbb{H}^{(1)}_0: \eta = 0$  and Assumptions \ref{assumption1}-\ref{assumption4}, then 
\begin{align*}
\mathcal{W}_{n,OLS}^{*(1)} \Rightarrow \underset{ \gamma \in \left[ \gamma_1, \gamma_2 \right]   }{ \text{sup} } &\left\{ \int_0^1 \mathbf{G}_{c , \phi }(s)  dW \left( s, F( \gamma) \right) - F( \gamma) \int_0^1 \mathbf{G}_{c , \phi }(s) dW(s) \right\}^{\top} 
\\
&\times \bigg\{ \big[ F(\gamma) \big( 1 - F(\gamma) \big) \big] \int_0^1 \mathbf{G}_{c , \phi }(s) \mathbf{G}_{c , \phi }(s)^{\prime} ds \bigg\}^{-1} 
\\
&\left\{ \int_0^1 \mathbf{G}_{c , \phi }(s)  dW \left( s, F( \gamma) \right) - F( \gamma) \int_0^1 \mathbf{G}_{c , \phi }(s) dW(s) \right\} \big/ \sigma^2_u
\end{align*}
\end{theorem}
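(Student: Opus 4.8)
The plan is to derive the limiting distribution of $\mathcal{W}_{n,OLS}^{*(1)}$ by decomposing the Wald statistic $\mathcal{W}_n(\gamma)$ into its constituent sample moments, establishing the joint weak convergence of these moments (indexed by $\gamma$) to the stated Brownian functionals, and then applying the continuous mapping theorem to the supremum over $\gamma \in [\gamma_1,\gamma_2]$. The starting point is the reformulation \eqref{form}, $y = \alpha + \beta x + X_2 \eta + u$, under $\mathbb{H}_0^{(1)}:\eta = 0$; here $X_2$ stacks $x_t I(q_t \le \gamma)$, and after the transformation principle $I(q_t \le \gamma) = I(U_t \le \lambda)$ with $\lambda = F(\gamma)$, the numerator of $\mathcal{W}_n(\gamma)$ is the quadratic form in $\hat\eta$ built from $X_\gamma^\top (I_n - P_x) X_\gamma$, where $X_\gamma$ collects $x_t I(q_t \le \gamma)$ partialled out of a constant and the full $x_t$. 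I would first write $\mathcal{W}_n(\gamma)$ explicitly as a product of (i) a cross-moment $n^{-1}\sum_t x_{t-1} I(q_t \le \gamma) u_t$ (after partialling), (ii) an inverse sample-covariance term $\big[n^{-2}\sum_t x_{t-1}x_{t-1}^\top I(q_t\le\gamma)(1 - \text{(projection)})\big]^{-1}$, and (iii) the transpose of (i), all divided by $\hat\sigma_u^2$.

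The core step is then establishing the two key joint convergences, both uniformly in $\gamma$. First, for the design moments: using $x_{\floor{nr}}/\sqrt{n} \Rightarrow \mathbf{G}_{c,\varphi}(r)$ from \eqref{Gprocess} together with Assumption \ref{assumption1} (strict stationarity and ergodicity of $q_t$) and a law of large numbers applied conditionally on the $\mathbf{G}$-path, I would show
\[
\frac{1}{n^2}\sum_{t=1}^{n} x_{t-1}x_{t-1}^\top I(q_t \le \gamma) \;\Rightarrow\; F(\gamma)\int_0^1 \mathbf{G}_{c,\varphi}(s)\mathbf{G}_{c,\varphi}(s)^\top ds,
\]
with the analogous statement for the full sum ($\gamma$ replaced by $+\infty$, i.e.\ $F(\gamma)$ replaced by $1$), so that the partialling-out operation produces the Schur-complement factor $F(\gamma)(1-F(\gamma))\int_0^1 \mathbf{G}_{c,\varphi}\mathbf{G}_{c,\varphi}^\top ds$ in the denominator. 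Second, for the martingale/score moments: since $\{\xi_t,\mathcal{F}_{n,t}\}$ is an MDS (Assumption \ref{assumption2}) with $q_{t}$ exogenous, the array $x_{t-1} I(q_t \le \gamma) u_t$ is an MDS in $t$, and a functional martingale CLT with the two time-indices $(r,\lambda)=(t/n, F(\gamma))$ gives
\[
\frac{1}{n}\sum_{t=1}^{\floor{nr}} x_{t-1} I(q_t\le\gamma) u_t \;\Rightarrow\; \sigma_u \int_0^r \mathbf{G}_{c,\varphi}(s)\, dW(s, F(\gamma)),
\]
where $W(\cdot,\cdot)$ is the two-parameter Brownian sheet (as in \cite{caner2001threshold}); the independence between $W(\cdot,\lambda)$-increments across the $\lambda$-direction is what encodes the variance structure $F(\gamma)(1-F(\gamma))$ after subtracting the projection onto the full-sample score $\int_0^1 \mathbf{G}_{c,\varphi}(s)dW(s)$. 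Combining these, $\hat\eta$ rescaled by the appropriate power of $n$ converges to $\big[F(\gamma)(1-F(\gamma))\int_0^1 \mathbf{G}_{c,\varphi}\mathbf{G}_{c,\varphi}^\top\big]^{-1}\{\int_0^1 \mathbf{G}_{c,\varphi}dW(s,F(\gamma)) - F(\gamma)\int_0^1 \mathbf{G}_{c,\varphi}dW(s)\}$, and plugging into the quadratic form, with $\hat\sigma_u^2 \to_p \sigma_u^2$, yields the integrand in the statement; the continuous mapping theorem applied to $\sup_{\gamma}$ (the map $\gamma \mapsto F(\gamma)$ being continuous and the functional being continuous in the sup topology on $[\gamma_1,\gamma_2]$, using Assumption \ref{assumption3}) completes the argument. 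Finally I would note that Assumption \ref{assumption4} guarantees the inverse in the denominator exists a.s., so the limit is well-defined.

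The main obstacle is proving the \emph{joint} weak convergence of the score process indexed simultaneously by the time fraction $r$ and the threshold level $\lambda = F(\gamma)$ to the two-parameter Brownian sheet $W(s,\lambda)$, and doing so with enough uniformity (tightness in the $\lambda$-direction) to justify the continuous mapping step over $\sup_\gamma$. This requires a two-parameter functional CLT for the MDS array $\{x_{t-1}I(U_t\le\lambda)u_t\}$ — the delicate points are (a) verifying the tightness/stochastic-equicontinuity condition in $\lambda$ uniformly in $r$ (a bracketing or moment-bound argument on increments $I(\lambda_1 < U_t \le \lambda_2)$, controlled via the mixing rate in Assumption \ref{assumption1} and the density bound in Assumption \ref{assumption3}), and (b) handling the fact that $x_{t-1}$ is itself a nonstationary, locally explosive array whose normalized path converges to the random process $\mathbf{G}_{c,\varphi}$, so the CLT must be stated conditionally (or as a stable/mixed-convergence statement) with $\mathbf{G}_{c,\varphi}$ and $W$ appearing jointly. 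I would handle this by first establishing finite-dimensional convergence via the martingale CLT with the conditional variance $n^{-2}\sum_t x_{t-1}x_{t-1}^\top I(q_t\le\lambda) \Rightarrow \lambda\int_0^1\mathbf{G}_{c,\varphi}\mathbf{G}_{c,\varphi}^\top$, then upgrading to process convergence via a standard chaining/tightness lemma, citing \cite{caner2001threshold} and \cite{hansen1996inference} for the two-parameter empirical-process machinery, and invoking the remark after Lemma \ref{lemma2} to confirm that the diminishing-threshold rate $\uptau$ does not enter this particular (null-hypothesis, fixed-$\gamma$) calculation.
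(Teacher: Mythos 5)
Your proposal follows essentially the same route as the paper's proof: both express $\mathcal{W}_n(\gamma)$ as the quadratic form $y^{\top}(I_n-P_x)X_{\gamma}\big[X_{\gamma}^{\top}(I_n-P_x)X_{\gamma}\big]^{-1}X_{\gamma}^{\top}(I_n-P_x)y/\hat{\sigma}_u^2$, invoke the appendix convergence results $n^{-2}\sum_t x_t x_t^{\top}I(q_t\le\gamma)\Rightarrow F(\gamma)\int_0^1\mathbf{G}_{c,\varphi}\mathbf{G}_{c,\varphi}^{\top}ds$ and $n^{-1}\sum_t x_{t,\gamma}u_t\Rightarrow\sigma_u\int_0^1\mathbf{G}_{c,\varphi}\,dW(s,F(\gamma))$ so that partialling out produces the $F(\gamma)(1-F(\gamma))$ factor, and finish with the continuous mapping theorem over the supremum. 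Your additional discussion of tightness in the $\lambda$-direction and the conditional/mixed-convergence formulation is more careful than the paper's terse treatment, but it is elaboration of the same argument rather than a different one.
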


\medskip

\begin{remark}
Notice that Theorem \ref{theorem1} gives the asymptotic distribution under the null hypothesis that $\mathbb{H}_0^{(1)}: \eta = 0$ which depends on a set of nuisance parameters such as the unknown threshold variable as well as the coefficient of persistence. This non-pivotal property of the limit theory for the Wald-OLS statistic can complicated inference especially in obtaining critical values and testing the hypothesis of interest. However, in the special case when $\phi = 0$ and $c > 0$ then, the result reduces to the asymptotic result given by Proposition 1 of \cite{gonzalo2012regime}. 
\end{remark}

Notice that testing for $\beta_1 = \beta_2 = 0$ for a given $\lambda \in (0,1)$ then induces the Wald statistic with the estimated threshold parameter $\hat{\lambda}$ with a limiting distribution given by 
\begin{align}
W_n ( \hat{\lambda} ) \Rightarrow \chi^2(2) + \frac{ \displaystyle \left[ \int_0^1 J^{\mu}_c (r) dB_u(r,1) \right]^2 }{  \displaystyle \sigma_u^2 \int J_c^{\mu}(r)^2 }    
\end{align}
regardless of whether $\alpha_1 = \alpha_2$ or $\alpha_1 \neq \alpha_2$. 

In other words, the Wald statistic above is useful for conducting inferences regarding the presence of regime specific slopes in the predictive regression model without prior knowledge on whether the model intercepts are regime dependent or not. In other words, the limiting distribution of the sup-Wald statistic evaluated at the estimated threshold parameter, $\hat{\lambda}$, is the same regardless of whether $\alpha_1 = \alpha_2$ or $\alpha_1 \neq \alpha_2$. Moreover, due to the presence of the nuisance parameter of persistence as well as the presence of endogeneity in the system, such that, the allowed correlation between the Brownian motions $B_u$ and $B_v$, then the analytical expression of its limit process has its second component depending on $\sigma_{uv}$. On the other hand, the limiting distribution above simplifies further under the requirement that $\sigma_{un} = 0$ (exogeneity assumption), such that $W_n ( \hat{\lambda} ) \Rightarrow \chi^2(2)$, as $n \to \infty$.   

Next, we focus on the null hypothesis that jointly tests the absence of linearity and no predictive power of the threshold predictive regression model which implies that  $\mathbb{H}_0^{(2)}: \alpha_1 = \alpha_2 , \beta_1 = \beta_2 = 0$. Equivalently, based on the formulation given by expression \eqref{form} the null hypothesis becomes $\mathbb{H}_0^{(2)}: \eta = 0 , \beta = 0$ and the limiting distribution\footnote{A related proof for the instrumental variable case is presented by \cite{lieberman2018iv}.} of the corresponding sup Wald-OLS and sup Wald-IVX tests are given by Theorem \ref{theorem2} below.

\newpage 

\begin{theorem}
\label{theorem2}
Under the null hypothesis $\mathbb{H}_0^{(2)}: \eta = 0, \beta = 0$  and Assumptions \ref{assumption1}-\ref{assumption4}, then 
\begin{align*}
\mathcal{W}_{n,OLS}^{*(2)} ( \lambda )  &\Rightarrow  \frac{ \displaystyle \left[ \int_0^1 G_{c , \phi }(s) dW(s) \right]^2}{ \displaystyle \sigma_u^2  \int_0^1 G^2_{c , \phi }(s) ds  } + \mathcal{W}_{n,OLS}^{*(1)} ( \lambda )
\\
\\
\mathcal{W}_{n,IVX}^{*(2)} ( \lambda ) &\Rightarrow W(1)^2 + \underset{ \lambda }{ \text{sup} } \ \frac{\mathcal{BB}(\lambda)^{\top}\mathcal{BB}(\lambda)}{ \lambda(1 - \lambda)},
\end{align*}
where $\mathcal{BB}(\lambda)$ the standard Brownian Bridge. 
\end{theorem}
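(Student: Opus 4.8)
Working with the reformulated model \eqref{form}, write $\text{SSR}_0$ for the residual sum of squares of the fit $y=\alpha+u$, $\text{SSR}_1$ for that of $y=\alpha+\beta x+u$, and $\text{SSR}_2(\gamma)$ for that of the full threshold model $y=\alpha+\beta x+X_2\eta+u$. These three models are nested, so Pythagoras applied to the successive orthogonal projections gives the algebraic identity $\text{SSR}_0-\text{SSR}_2(\gamma)=\big(\text{SSR}_0-\text{SSR}_1\big)+\big(\text{SSR}_1-\text{SSR}_2(\gamma)\big)$. Dividing by a consistent, null-restricted variance estimator $\hat\sigma_u^2$ (the three Wald statistics involved are asymptotically equivalent under $\mathbb{H}_0^{(2)}$ whichever estimator is used) yields
\[
\mathcal{W}_n^{(2)}(\gamma)=\mathcal{W}_n^{\beta}+\mathcal{W}_n^{(1)}(\gamma),
\]
where $\mathcal{W}_n^{\beta}=(\text{SSR}_0-\text{SSR}_1)/\hat\sigma_u^2$ is the Wald statistic for $\beta=0$ in the linear predictive regression, and $\mathcal{W}_n^{(1)}(\gamma)=(\text{SSR}_1-\text{SSR}_2(\gamma))/\hat\sigma_u^2$ coincides with the partialled-out quadratic form $\hat\eta^{\top}\big[X_\gamma^{\top}(I_n-P_x)X_\gamma\big]\hat\eta/\hat\sigma_u^2$ of Theorem~\ref{theorem1}. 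Because $\mathcal{W}_n^{\beta}$ does not depend on $\gamma$, taking suprema gives $\mathcal{W}_{n,OLS}^{*(2)}=\mathcal{W}_n^{\beta}+\sup_{\gamma\in\Gamma}\mathcal{W}_n^{(1)}(\gamma)$. Replacing least-squares projections by the corresponding IVX moment conditions (instrumenting both the slope regressor and the threshold regressors by the mildly integrated $z_{tn}$) gives the analogous exact decomposition for $\mathcal{W}_{n,IVX}^{*(2)}$.

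\textbf{Step 2: the predictability piece.}
Under $\mathbb{H}_0^{(2)}$ we have $y_t=\alpha+u_{yt}$, so after demeaning $\mathcal{W}_n^{\beta}=\big(\sum_t \bar x_{t-1}u_{yt}\big)^2\big/\big(\hat\sigma_u^2\sum_t \bar x_{t-1}^2\big)$, with $\bar x_{t-1}$ the demeaned regressor. Since $\boldsymbol{\Sigma}_{\xi\xi}$ is block diagonal, $u_{yt}$ is contemporaneously uncorrelated with $\boldsymbol u_{xt},\boldsymbol u_{\varphi t}$ and, by the MDS property of Assumption~\ref{assumption2}, $\sum_t \bar x_{t-1}u_{yt}$ carries no second-order bias term. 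Using the invariance principle \eqref{ip}, the weak limit $\boldsymbol G_{c,\varphi}$ of $n^{-1/2}x_{\floor{nr}}$ in \eqref{Gprocess}, Assumption~\ref{assumption4}, and the continuous mapping theorem, $n^{-1}\sum_t \bar x_{t-1}u_{yt}\Rightarrow\sigma_u\int_0^1 G_{c,\varphi}(s)\,dW(s)$ and $n^{-2}\sum_t \bar x_{t-1}^2\Rightarrow\int_0^1 G_{c,\varphi}^2(s)\,ds$, so that $\mathcal{W}_n^{\beta}\Rightarrow\big[\int_0^1 G_{c,\varphi}(s)\,dW(s)\big]^2\big/\big(\sigma_u^2\int_0^1 G_{c,\varphi}^2(s)\,ds\big)$, the first term displayed in the theorem. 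For the IVX version, replacing $\bar x_{t-1}$ by $z_{tn}$ and invoking the mixed-Gaussian limit of $n^{(1+\upgamma_z)/2}(\widehat{\boldsymbol\beta}^{ivx}-\boldsymbol\beta)$ together with consistency of its self-normalizing variance estimator $\mathbb{V}^{ivx}$, the IVX predictability Wald statistic is asymptotically pivotal, $\mathcal{W}_{n,IVX}^{\beta}\Rightarrow\chi^2_p\equiv W(1)^2$.

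\textbf{Step 3: the nonlinearity piece and assembly.}
Theorem~\ref{theorem1} already gives $\sup_{\gamma\in\Gamma}\mathcal{W}_n^{(1)}(\gamma)\Rightarrow\mathcal{W}_{n,OLS}^{*(1)}$ (note the limit of $\mathcal{W}_n^{(1)}(\gamma)$ is unchanged under $\mathbb{H}_0^{(2)}$, since partialling out $(1,x)$ removes the slope regressor whatever the true $\beta$). For the IVX version, the mildly integrated instrument renders the threshold-indexed sample moments asymptotically equivalent to those of a stationary array: the instrumented, indicator-weighted partial sums $n^{-(1+\upgamma_z)/2}\sum_{t\le n} z_{tn}u_{yt}\,I\big(F(q_t)\le\lambda\big)$ converge, using the $R$-mixing property recorded earlier (weak convergence in $D[0,1]$ under the $J_1$ topology), to a Gaussian process whose $\lambda$-centered component is a standard Brownian bridge, whence $\mathcal{W}_{n,IVX}^{(1)}(\lambda)\Rightarrow\mathcal{BB}(\lambda)^{\top}\mathcal{BB}(\lambda)/(\lambda(1-\lambda))$ and, by continuous mapping, its supremum over $\lambda$. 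Finally, joint weak convergence of the underlying partial-sum processes yields joint convergence of $\big(\mathcal{W}_n^{\beta},\,\sup_\gamma\mathcal{W}_n^{(1)}(\gamma)\big)$ in each case; applying the continuous mapping theorem to the sum from Step~1 produces the two displayed limits, and in the IVX case the self-normalization further decouples the $\chi^2$ and Brownian-bridge components so that they enter additively (and independently).

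\textbf{Main obstacle.}
The delicate step is the IVX nonlinearity piece. One must show that instrumenting by the mildly integrated $z_{tn}$ in the genuine SLUR/locally explosive environment — where the autoregressive coefficient carries the extra stochastic corrections of \eqref{LSTUR} — actually removes the dependence of the threshold-indexed moments on the nuisance parameters $(c,\varphi)$ and restores the classical sup-Wald-with-Brownian-bridge limit. This is precisely what the ``incomplete proof'' in Section~\ref{Section.IVX} leaves open: it requires controlling the three correction terms $\eta^{(1)}_{n,t-1},\eta^{(2)}_{n,t-1},\eta^{(3)}_{n,t-1}$, verifying that the $R$-mixing-based functional CLT holds uniformly in $\lambda\in[\pi_1,\pi_2]$, and establishing the asymptotic block-diagonality of the IVX information matrix that separates the $\beta$- and $\eta$-directions so the limit is a clean sum of a $\chi^2_p$ and an independent supremum of a normalized Brownian bridge.
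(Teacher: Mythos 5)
Your decomposition of the joint statistic into a $\gamma$-free predictability component plus the Theorem~\ref{theorem1} nonlinearity component, with the IVX self-normalization delivering the pivotal $\chi^2$ plus Brownian-bridge form, is exactly the route the paper intends: its own proof of Theorem~\ref{theorem2} consists only of a two-sentence outline asserting that the OLS limit is non-standard and that the IVX limit decomposes into a $\chi^2$ component and a Brownian-bridge term as in \cite{gonzalo2012regime}. Your write-up is therefore strictly more detailed than the paper's, and the IVX step you flag as the main obstacle --- controlling the correction terms $\eta^{(1)}_{n,t-1}$, $\eta^{(2)}_{n,t-1}$, $\eta^{(3)}_{n,t-1}$ and establishing uniformity in $\lambda$ --- is likewise left unproved in the paper (the derivation in Section~\ref{Section.IVX} is explicitly marked incomplete), so your proposal omits nothing that the paper actually supplies.
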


\begin{remark}
Notice that Theorem \ref{theorem2} gives the asymptotic distribution of the sup Wald test under the null hypothesis $\mathbb{H}_0^{(2)}: \eta = 0, \beta = 0$ for both the OLS and IVX estimators. In the case we employ the IVX estimator, we verify that the limiting distribution is free of nuisance parameters and follow a similar form as the asymptotic result presented by Proposition 3 of \cite{gonzalo2012regime}. The significance of our result  is that we generalize the robust property of the IVX estimator in the case of predictors with hybrid stochastic local unit roots.
\end{remark}

\begin{remark}
 For the proof of Theorem 2 we focus on the IVX instrumentation in the case of stochastic local unit roots regressors in the predictive regression model. The main idea of this theorem is to prove that the IVX-based estimation methodology results in a limiting distribution of the sup-Wald statistic which is equivalent to the one obtained under strict exogeneity.  Similarly, for Theorem 2 above, under the assumption of an estimated threshold variable, $\hat{\lambda}$, then the limiting distribution of the  $W_{n, ivx } ( \hat{\lambda} ) \Rightarrow \chi^2(2)$, as $n \to \infty$ since the supremum functional is replaced by the exact estimated value of the threshold parameter.  
\end{remark}


\section{Simulation Experiments}
\label{Section4}

In this Section, we examine the finite sample performance of the proposed estimators and predictability tests via an extensive Monte Carlo simulation study.

\subsection{Data Generating Process}

We consider the following data generating process  
\begin{align}
y_t &= \alpha_0 + \frac{2}{n^{0.25} } \boldsymbol{x}_{t-1} I \left( q_t \leq \gamma_0 \right) + u_{t}, 
\\
\begin{bmatrix}
x_{1t}
\\
x_{2t}
\end{bmatrix}
&= 
\begin{bmatrix}
\rho_{1} & 0 
\\
0 & \rho_{2}
\end{bmatrix}
\begin{bmatrix}
x_{1t-1}
\\
x_{2t-1}
\end{bmatrix} + v_{t} \\
\uprho_{nj} &= \text{exp} \left\{ \frac{c}{ n } + \frac{ \alpha \epsilon_{t} }{\sqrt{n}} \right\}, 
\end{align}
where $q_t$ and $\xi_t = \left( u_t, v_t, \upepsilon_t \right)$ are independently normally distributed with mean zero and variance one and $\gamma_0 = 0.25$. Furthermore, we need to choose appropriate values for the unknown parameter $\varphi$ and the nuisance parameter of persistence. We consider that the coefficient of persistence takes values such that $c \in \left\{ 1, 2, 5, 10 \right\}$ and $\phi \in \left\{ 0, 0.05 , 0.25, 0.50 \right\}$. Each data generating process is replicated based on $B = 5,000$ and we consider sample sizes such as $n \in \left\{ 250, 500 \right\}$.

We obtain the threshold estimator based on both the OLS and the IVX estimation and compare the empirical size results. In summary we observe that the performance of the threshold estimator improves as the sample size increases. We also observe the diminishing threshold effect for the threshold estimator across different values of $(c , \phi)$ and $n$. However, when we observe the threshold estimators based on the OLS versus the IVX procedure in relation to the coefficient of persistence, we can clearly see that the IVX estimator produces smoother convergence rates and empirical sizes closer to the nominal size especially under the assumption of high persistence in the regressors. 

\

\subsection{Empirical size and power of the predictability tests}

Under the null hypothesis of no threshold effects, the model reduces to the standard predictive regression model, which implies that there is absence of nonlinearity. Thus, the empirical size is obtained with the use of the sup-Wald statistics by replicating the DGP and counting the frequency of rejecting the null hypothesis with respect to the replications. Similarly, we can obtain the empirical power of the tests under the alternative hypothesis of nonlinearity and predictability.  




For the experimental design we consider $B = 1000$ and $n = \left\{ 250, 500 \right\}$ to simplify the computational time. Moreover, we consider a predictive regression model with three types of regressors, such that \textit{(i)} mildly integrated regressors (where the degree of persistence implies that the LUR component is close to the unit boundary), \textit{(ii)} mildly explosive regressors (where the degree of persistence implies that the LUR component is on the explosive side of the unit boundary) and  \textit{(iii)} near nonstationary regressors (where the degree of persistence implies that the LUR component is well below the unit boundary). Furthermore, to access the performance of the proposed test statistics with respect to the relative efficiency of the IVX to the OLS estimator, we only consider the case in which all the regressors are locally explosive.

In addition, to exclude other effects, we assume that the degree of endogeneity in the system remains the same across all variables and is given by the a fixed covariance matrix, which is unchanged throughout the simulation step. Then, under the null hypothesis of no threshold effect both test statistics are constructed under the assumption that the parameter vector across the two regimes remains the same, such that, $\beta_1 = \beta_2$ (e.g., in the case when we assume that the model includes no intercepts) and $\theta_1 = \theta_2$, where $\theta_j = (  \alpha_j, \beta_j )$ for $j \in \left\{ 1,2 \right\}$ when the model includes both a slope and an intercept. Furthermore, we use a pre-specified threshold cut-off point which also remains fixed through the replication steps. Another used specific parameter which is the coefficient of persistence for the instrumentation procedure, this is also kept fixed to avoid additional complexitiy when comparing the finite sample performance of the test statistics under the null hypothesis. 


\newpage 


\section{Empirical Application}
\label{Section5}

In this Section, we examine an empirical implementation of the proposed framework. Our goal is to uncover regime-specific predictability and threshold effects in financial markers, focusing on the pre-pandemic and post-pandemic sampling periods. More specifically, our aim is to assess whether the data support the presence of regime-specific predictability due to the socio-economic events resulted from the 2019 pandemic. 

\subsection{Regime-Specific Predictability}

The stock return predictability is a major puzzle in financial economics. The literature goes back several decades; we briefly highlight important studies. The seminal paper of \cite{perron1989great} discuss the unit root hypothesis around periods of financial turbulence. \cite{dejong1991temporal} study the debate whether divided are trend-stationary or integrated processes while \cite{stein1991stock} propose a framework for stock price distributions and stochastic volatility (see, also \cite{menzly2004understanding}). Moreover, \cite{timmermann1993learning} examines the excess volatility and predictability puzzle in stock prices. Further studies related to predictability testing include \cite{campbell2006efficient}, \cite{kostakis2015Robust} \cite{kasparis2015nonparametric}, \cite{demetrescu2020testing}. In this paper, we focus on the regime-specific\footnote{Notice that for instance formal econometric methodologies for testing for regime switching are proposed by \cite{cho2007testing}, however in this paper our focus is to examine whether we find statistical significant evidence of regime-specific predictability rather than to test for regime-switching.} predictability hypothesis as proposed by the studies of \cite{gonzalo2012regime, gonzalo2017inferring}. 

We are also inspired by the work of \cite{hatchondo2016debt}\footnote{The particular framework even though is driven from the perspective of sovereign debt it includes the main idea that an endogenous threshold variable such as the price at which long-term bonds would trade without current-period borrowing, drives the identification strategy of the model.} who introduce the idea of a threshold macroeconomic variable driving economic policy making as well as the paper of \cite{atanasov2020consumption} who examine consumption fluctuations and predictability of expected returns. Therefore, we are motivated to study an additional aspect not previously examined in the predictability literature such as the effect of economic policy uncertainty as a potential threshold variable. \cite{baker2016measuring} introduce a measure for economic policy uncertainty and examine the level of predictable policy responses\footnote{A different aspect but of possible interest is how systemic risk can affect stock return predictability. For instance, the aspects of uncertainty and systemic risk are examined by \cite{dicks2019uncertainty}. Moreover, a discussion on the aspects of fluctuations in uncertainty is provided by \cite{bloom2014fluctuations}.}.  

In particular, economic policy uncertainty can be considered as a set of shocks which affects the predictability of returns in relation to the commonly used predictors in the literature. Therefore, we aim to use the indicator of economic policy uncertainty introduced by Baker et al. (2016) as an exogenous threshold variable to assess the existence of regime-specific predictability.

\newpage 



\section{Discussion}
\label{Section6}

In this paper, we study a special special class of persistence which is the class of hybrid stochastic local unit roots, for the threshold predictive regression model, filling the gap in the literature of predictability tests. Our study extends the current methodologies of identifying regime specific predictability with persistence or unit root predictors. In particular, by incorporating such persistence shifts with the framework proposed by \cite{lieberman2020hybrid} it allows us to reformulate the predictability tests proposed by \cite{gonzalo2012regime, gonzalo2017inferring} and \cite{kostakis2015Robust} generalizing this way the particular asymptotic results. We hope that our proposed framework will be helpful to practitioners who are interested in detecting predictability and threshold effects under the presence of potential hybrid stochastic local unit roots in equity indices and financial variables capturing economic conditions and market sentiment.   

The asymptotic theory of this paper confirms that the estimation and inference of the threshold predictive regression model produces consistent parameter estimates. Moreover, the simulation experiment demonstrate good empirical size and power properties across different values of the unknown degree of persistence. The additional persistence properties we consider in this paper, allows to capture how the effect of shifting persistence as modelled via the hybrid stochastic local unit root affects the presence of the threshold effect. In terms of the IVX instrumentation procedure, the IVX instrument can filter out the abstract persistence occurred from the unknown coefficient of persistence as well as the component which comes from the additional term in the LUR specification. The additional component is considered as an extra source of innovation appeared in the system by a set of exogenous covariates. Thus, we can consider these being for example, the effect of economic shocks, such as the economic policy uncertainty to the persistence properties of regressors and the predictability in the model. Notice that a key assumption for the development of the asymptotic theory is the assumption of diminishing threshold effects. This allows us to obtain the limiting distribution of the threshold estimator as well as the asymptotic behaviour of the sup Wald statistic for testing the existence of the threshold effect under the null hypothesis. 

Finally, some important extensions to the present work are worth mentioning. Firstly, time-varying predictability can be modelled in parallel to the hybrid stochastic local to unity specification. Secondly, one can consider extending our proposed threshold predictive regression model to allow for multiple regimes and thus the presence of multiple threshold effects (see, for example \cite{chiou2018nonparametric}). This can increase significantly the complexity of the framework but nevertheless methods such as the ones proposed by \cite{gonzalo2002estimation} and \cite{gonzalo2005subsampling} could be utilized. Thirdly, considering the presence of an endogenous threshold variable or smooth transitions between regimes such as in \cite{luukkonen1988testing}. Last but not least, examining the implementation of our proposed tests in forecasting and predictive accuracy environments such as in the studies of \cite{emilianouncovering} and \cite{pitarakis2020novel} is another possibility; all these being interesting applications for future research.

\newpage 
   
\bibliographystyle{apalike}
\bibliography{myreferences1}

\begin{thebibliography}{}

\bibitem[Andrews, 1993]{andrews1993tests}
Andrews, D.~W. (1993).
\newblock Tests for parameter instability and structural change with unknown
  change point.
\newblock {\em \href{https://doi.org/10.2307/2951764}{Econometrica: Journal of
  the Econometric Society}}, pages 821--856.

\bibitem[Andrews and Ploberger, 1994]{andrews1994optimal}
Andrews, D.~W. and Ploberger, W. (1994).
\newblock Optimal tests when a nuisance parameter is present only under the
  alternative.
\newblock {\em Econometrica: Journal of the Econometric Society}, pages
  1383--1414.

\bibitem[Atanasov et~al., 2020]{atanasov2020consumption}
Atanasov, V., M{\o}ller, S.~V., and Priestley, R. (2020).
\newblock Consumption fluctuations and expected returns.
\newblock {\em The Journal of Finance}, 75(3):1677--1713.

\bibitem[Baker et~al., 2016]{baker2016measuring}
Baker, S.~R., Bloom, N., and Davis, S.~J. (2016).
\newblock Measuring economic policy uncertainty.
\newblock {\em The quarterly journal of economics}, 131(4):1593--1636.

\bibitem[Billingsley, 1968]{billingsley1968convergence}
Billingsley, P. (1968).
\newblock {\em Convergence of probability measures}.
\newblock John Wiley \& Sons.

\bibitem[Bloom, 2014]{bloom2014fluctuations}
Bloom, N. (2014).
\newblock Fluctuations in uncertainty.
\newblock {\em Journal of Economic Perspectives}, 28(2):153--76.

\bibitem[Bohn, 1998]{bohn1998behavior}
Bohn, H. (1998).
\newblock The behavior of us public debt and deficits.
\newblock {\em the Quarterly Journal of economics}, 113(3):949--963.

\bibitem[Brunnermeier, 2009]{brunnermeier2009deciphering}
Brunnermeier, M.~K. (2009).
\newblock Deciphering the liquidity and credit crunch 2007-2008.
\newblock {\em Journal of Economic perspectives}, 23(1):77--100.

\bibitem[Bykhovskaya and Phillips, 2020]{bykhovskaya2020point}
Bykhovskaya, A. and Phillips, P.~C. (2020).
\newblock Point optimal testing with roots that are functionally local to
  unity.
\newblock {\em Journal of Econometrics}, 219(2):231--259.

\bibitem[Campbell and Yogo, 2006]{campbell2006efficient}
Campbell, J.~Y. and Yogo, M. (2006).
\newblock Efficient tests of stock return predictability.
\newblock {\em Journal of financial economics}, 81(1):27--60.

\bibitem[Caner and Hansen, 2001]{caner2001threshold}
Caner, M. and Hansen, B.~E. (2001).
\newblock Threshold autoregression with a unit root.
\newblock {\em Econometrica}, 69(6):1555--1596.

\bibitem[Chan, 1993]{chan1993consistency}
Chan, K.-S. (1993).
\newblock Consistency and limiting distribution of the least squares estimator
  of a threshold autoregressive model.
\newblock {\em Annals of statistics}, 21(1):520--533.

\bibitem[Chen, 2015]{chen2015robust}
Chen, H. (2015).
\newblock Robust estimation and inference for threshold models with integrated
  regressors.
\newblock {\em Econometric Theory}, pages 778--810.

\bibitem[Chiou et~al., 2018]{chiou2018nonparametric}
Chiou, Y.-Y., Chen, M.-Y., and Chen, J.-e. (2018).
\newblock Nonparametric regression with multiple thresholds: Estimation and
  inference.
\newblock {\em Journal of Econometrics}, 206(2):472--514.

\bibitem[Cho and White, 2007]{cho2007testing}
Cho, J.~S. and White, H. (2007).
\newblock Testing for regime switching.
\newblock {\em Econometrica}, 75(6):1671--1720.

\bibitem[Davies, 1977]{davies1977hypothesis}
Davies, R.~B. (1977).
\newblock Hypothesis testing when a nuisance parameter is present only under
  the alternative.
\newblock {\em Biometrika}, 64(2):247--254.

\bibitem[Davies, 1987]{davies1987hypothesis}
Davies, R.~B. (1987).
\newblock Hypothesis testing when a nuisance parameter is present only under
  the alternative.
\newblock {\em Biometrika}, 74(1):33--43.

\bibitem[DeJong and Whiteman, 1991]{dejong1991temporal}
DeJong, D.~N. and Whiteman, C.~H. (1991).
\newblock The temporal stability of dividends and stock prices: Evidence from
  the likelihood function.
\newblock {\em The American Economic Review}, pages 600--617.

\bibitem[Demetrescu et~al., 2020]{demetrescu2020testing}
Demetrescu, M., Georgiev, I., Rodrigues, P.~M., and Taylor, A.~R. (2020).
\newblock Testing for episodic predictability in stock returns.
\newblock {\em Journal of Econometrics}.

\bibitem[Dicks and Fulghieri, 2019]{dicks2019uncertainty}
Dicks, D.~L. and Fulghieri, P. (2019).
\newblock Uncertainty aversion and systemic risk.
\newblock {\em Journal of Political Economy}, 127(3):1118--1155.

\bibitem[Duffy and Kasparis, 2021]{duffy2021estimation}
Duffy, J.~A. and Kasparis, I. (2021).
\newblock Estimation and inference in the presence of fractional d= 1/2 and
  weakly nonstationary processes.
\newblock {\em The Annals of Statistics}, 49(2):1195--1217.

\bibitem[Emiliano Da Silva~Neto et~al., 2021]{emilianouncovering}
Emiliano Da Silva~Neto, A., Gonzalo, J., and Pitarakis, J.-Y. (2021).
\newblock Uncovering regimes in out of sample forecast errors from predictive
  regressions.
\newblock {\em Oxford Bulletin of Economics and Statistics}.

\bibitem[Farmer et~al., 2019]{farmer2019pockets}
Farmer, L., Schmidt, L., and Timmermann, A. (2019).
\newblock Pockets of predictability.
\newblock {\em Available at SSRN 3152386}.

\bibitem[Galvao et~al., 2014]{galvao2014testing}
Galvao, A.~F., Kato, K., Montes-Rojas, G., and Olmo, J. (2014).
\newblock Testing linearity against threshold effects: uniform inference in
  quantile regression.
\newblock {\em Annals of the Institute of Statistical Mathematics},
  66(2):413--439.

\bibitem[Galvao et~al., 2011]{galvao2011thresholdt}
Galvao, A.~F., Montes-Rojas, G., and Olmo, J. (2011).
\newblock Threshold quantile autoregressive models.
\newblock {\em Journal of Time Series Analysis}, 32(3):253--267.

\bibitem[Georgiev et~al., 2021]{georgiev2021extensions}
Georgiev, I., Demetrescu, M., Rodrigues, P.~M., and Taylor, A. (2021).
\newblock Extensions to ivx methods of inference for return predictability.

\bibitem[Gonzalo and Pitarakis, 2002]{gonzalo2002estimation}
Gonzalo, J. and Pitarakis, J.-Y. (2002).
\newblock Estimation and model selection based inference in single and multiple
  threshold models.
\newblock {\em Journal of Econometrics}, 110(2):319--352.

\bibitem[Gonzalo and Pitarakis, 2006]{gonzalo2006threshold}
Gonzalo, J. and Pitarakis, J.-Y. (2006).
\newblock Threshold effects in cointegrating relationships.
\newblock {\em Oxford Bulletin of Economics and Statistics}, 68:813--833.

\bibitem[Gonzalo and Pitarakis, 2012]{gonzalo2012regime}
Gonzalo, J. and Pitarakis, J.-Y. (2012).
\newblock Regime-specific predictability in predictive regressions.
\newblock {\em Journal of Business \& Economic Statistics}, 30(2):229--241.

\bibitem[Gonzalo and Pitarakis, 2017]{gonzalo2017inferring}
Gonzalo, J. and Pitarakis, J.-Y. (2017).
\newblock Inferring the predictability induced by a persistent regressor in a
  predictive threshold model.
\newblock {\em Journal of Business \& Economic Statistics}, 35(2):202--217.

\bibitem[Gonzalo and Wolf, 2005]{gonzalo2005subsampling}
Gonzalo, J. and Wolf, M. (2005).
\newblock Subsampling inference in threshold autoregressive models.
\newblock {\em Journal of Econometrics}, 127(2):201--224.

\bibitem[Guo et~al., 2019]{guo2019testing}
Guo, G., Sun, Y., and Wang, S. (2019).
\newblock Testing for moderate explosiveness.
\newblock {\em The Econometrics Journal}, 22(1):73--95.

\bibitem[Hall and Heyde, 1981]{hall1981martingale}
Hall, P. and Heyde, C.~C. (1981).
\newblock {\em Martingale limit theory and its application}.
\newblock Academic press.

\bibitem[Hansen, 1996]{hansen1996inference}
Hansen, B.~E. (1996).
\newblock Inference when a nuisance parameter is not identified under the null
  hypothesis.
\newblock {\em Econometrica: Journal of the econometric society}, pages
  413--430.

\bibitem[Hansen, 2000]{hansen2000sample}
Hansen, B.~E. (2000).
\newblock Sample splitting and threshold estimation.
\newblock {\em Econometrica}, 68(3):575--603.

\bibitem[Hatchondo et~al., 2016]{hatchondo2016debt}
Hatchondo, J.~C., Martinez, L., and Sosa-Padilla, C. (2016).
\newblock Debt dilution and sovereign default risk.
\newblock {\em Journal of Political Economy}, 124(5):1383--1422.

\bibitem[Horv{\'a}th et~al., 2020]{horvath2020sequential}
Horv{\'a}th, L., Liu, Z., Rice, G., and Wang, S. (2020).
\newblock Sequential monitoring for changes from stationarity to mild
  non-stationarity.
\newblock {\em Journal of Econometrics}, 215(1):209--238.

\bibitem[Kapetanios et~al., 2014]{kapetanios2014nonlinear}
Kapetanios, G., Mitchell, J., and Shin, Y. (2014).
\newblock A nonlinear panel data model of cross-sectional dependence.
\newblock {\em Journal of Econometrics}, 179(2):134--157.

\bibitem[Kasparis et~al., 2015]{kasparis2015nonparametric}
Kasparis, I., Andreou, E., and Phillips, P.~C. (2015).
\newblock Nonparametric predictive regression.
\newblock {\em Journal of Econometrics}, 185(2):468--494.

\bibitem[Khoshnevisan and Lewis, 1996]{khoshnevisan1996uniform}
Khoshnevisan, D. and Lewis, T.~M. (1996).
\newblock The uniform modulus of continuity of iterated brownian motion.
\newblock {\em Journal of Theoretical Probability}, 9(2):317--333.

\bibitem[Kostakis et~al., 2015]{kostakis2015Robust}
Kostakis, A., Magdalinos, T., and Stamatogiannis, M.~P. (2015).
\newblock Robust econometric inference for stock return predictability.
\newblock {\em The Review of Financial Studies}, 28(5):1506--1553.

\bibitem[Kourtellos et~al., 2017]{kourtellos2017endogeneity}
Kourtellos, A., Stengos, T., and Sun, Y. (2017).
\newblock Endogeneity in semiparametric threshold regression.
\newblock {\em Available at SSRN 3080478}.

\bibitem[Kourtellos et~al., 2014]{kourtellos2014structural}
Kourtellos, A., Stengos, T., and Tan, C.~M. (2014).
\newblock Structural threshold regression.
\newblock {\em Econometric Theory}.

\bibitem[Lieberman and Phillips, 2018]{lieberman2018iv}
Lieberman, O. and Phillips, P. (2018).
\newblock Iv and gmm inference in endogenous stochastic unit root models.
\newblock {\em Econometric theory}, 34(5):1065--1100.

\bibitem[Lieberman and Phillips, 2017]{lieberman2017multivariate}
Lieberman, O. and Phillips, P.~C. (2017).
\newblock A multivariate stochastic unit root model with an application to
  derivative pricing.
\newblock {\em Journal of Econometrics}, 196(1):99--110.

\bibitem[Lieberman and Phillips, 2020]{lieberman2020hybrid}
Lieberman, O. and Phillips, P.~C. (2020).
\newblock Hybrid stochastic local unit roots.
\newblock {\em Journal of Econometrics}, 215(1):257--285.

\bibitem[Luukkonen et~al., 1988]{luukkonen1988testing}
Luukkonen, R., Saikkonen, P., and Ter{\"a}svirta, T. (1988).
\newblock Testing linearity against smooth transition autoregressive models.
\newblock {\em Biometrika}, 75(3):491--499.

\bibitem[Lux, 1995]{lux1995herd}
Lux, T. (1995).
\newblock Herd behaviour, bubbles and crashes.
\newblock {\em The economic journal}, 105(431):881--896.

\bibitem[Maasoumi and Racine, 2002]{maasoumi2002entropy}
Maasoumi, E. and Racine, J. (2002).
\newblock Entropy and predictability of stock market returns.
\newblock {\em Journal of Econometrics}, 107(1-2):291--312.

\bibitem[Magdalinos and Phillips, 2009]{magdalinos2009limit}
Magdalinos, T. and Phillips, P.~C. (2009).
\newblock Limit theory for cointegrated systems with moderately integrated and
  moderately explosive regressors.
\newblock {\em Econometric Theory}, 25(2):482--526.

\bibitem[Menzly et~al., 2004]{menzly2004understanding}
Menzly, L., Santos, T., and Veronesi, P. (2004).
\newblock Understanding predictability.
\newblock {\em Journal of Political Economy}, 112(1):1--47.

\bibitem[Michaelides et~al., 2016]{michaelides2016non}
Michaelides, P.~G., Tsionas, E.~G., and Konstantakis, K.~N. (2016).
\newblock Non-linearities in financial bubbles: Theory and bayesian evidence
  from s\&p500.
\newblock {\em Journal of Financial Stability}, 24:61--70.

\bibitem[Nielsen, 2010]{nielsen2010analysis}
Nielsen, B. (2010).
\newblock Analysis of coexplosive processes.
\newblock {\em Econometric Theory}, pages 882--915.

\bibitem[Pedersen and Sch{\"u}tte, 2020]{pedersen2020testing}
Pedersen, T.~Q. and Sch{\"u}tte, E. C.~M. (2020).
\newblock Testing for explosive bubbles in the presence of autocorrelated
  innovations.
\newblock {\em Journal of Empirical Finance}, 58:207--225.

\bibitem[Perron, 1989]{perron1989great}
Perron, P. (1989).
\newblock The great crash, the oil price shock, and the unit root hypothesis.
\newblock {\em Econometrica: journal of the Econometric Society}, pages
  1361--1401.

\bibitem[Pesaran and Timmermann, 1995]{pesaran1995predictability}
Pesaran, M.~H. and Timmermann, A. (1995).
\newblock Predictability of stock returns: Robustness and economic
  significance.
\newblock {\em The Journal of Finance}, 50(4):1201--1228.

\bibitem[Phillips, 1987a]{phillips1987time}
Phillips, P.~C. (1987a).
\newblock Time series regression with a unit root.
\newblock {\em Econometrica: Journal of the Econometric Society}, pages
  277--301.

\bibitem[Phillips, 1987b]{phillips1987towards}
Phillips, P.~C. (1987b).
\newblock Towards a unified asymptotic theory for autoregression.
\newblock {\em Biometrika}, 74(3):535--547.

\bibitem[Phillips and Magdalinos, 2007]{phillips2007limit}
Phillips, P.~C. and Magdalinos, T. (2007).
\newblock Limit theory for moderate deviations from a unit root.
\newblock {\em Journal of Econometrics}, 136(1):115--130.

\bibitem[Phillips and Magdalinos, 2009]{phillipsmagdal2009econometric}
Phillips, P.~C. and Magdalinos, T. (2009).
\newblock Econometric inference in the vicinity of unity.
\newblock {\em Singapore Management University, CoFie Working Paper}, 7.

\bibitem[Phillips and Perron, 1988]{phillips1988testing}
Phillips, P.~C. and Perron, P. (1988).
\newblock Testing for a unit root in time series regression.
\newblock {\em Biometrika}, 75(2):335--346.

\bibitem[Phillips and Solo, 1992]{phillips1992asymptotics}
Phillips, P.~C. and Solo, V. (1992).
\newblock Asymptotics for linear processes.
\newblock {\em The Annals of Statistics}, pages 971--1001.

\bibitem[Phillips et~al., 2011]{phillips2011explosive}
Phillips, P.~C., Wu, Y., and Yu, J. (2011).
\newblock Explosive behavior in the 1990s nasdaq: When did exuberance escalate
  asset values?
\newblock {\em International economic review}, 52(1):201--226.

\bibitem[Phillips and Yu, 2011]{phillips2011dating}
Phillips, P.~C. and Yu, J. (2011).
\newblock Dating the timeline of financial bubbles during the subprime crisis.
\newblock {\em Quantitative Economics}, 2(3):455--491.

\bibitem[Pitarakis, 2008]{pitarakis2008comment}
Pitarakis, J.-Y. (2008).
\newblock Comment on: threshold autoregressions with a unit root.
\newblock {\em Econometrica}, 76(5):1207--1217.

\bibitem[Pitarakis, 2020]{pitarakis2020novel}
Pitarakis, J.-Y. (2020).
\newblock A novel approach to predictive accuracy testing in nested
  environments.
\newblock {\em arXiv preprint arXiv:2008.08387}.

\bibitem[Staiger and Stock, 1997]{staiger1997instrumental}
Staiger, D. and Stock, J.~H. (1997).
\newblock Instrumental variables regression with weak instruments.
\newblock {\em Econometrica: journal of the Econometric Society}, pages
  557--586.

\bibitem[Stambaugh, 1999]{stambaugh1999predictive}
Stambaugh, R.~F. (1999).
\newblock Predictive regressions.
\newblock {\em Journal of Financial Economics}, 54(3):375--421.

\bibitem[Stein and Stein, 1991]{stein1991stock}
Stein, E.~M. and Stein, J.~C. (1991).
\newblock Stock price distributions with stochastic volatility: an analytic
  approach.
\newblock {\em The review of financial studies}, 4(4):727--752.

\bibitem[Terasvirta et~al., 2010]{terasvirta2010modelling}
Terasvirta, T., Tjostheim, D., Granger, C. W.~J., et~al. (2010).
\newblock {\em Modelling nonlinear economic time series}.
\newblock Oxford University Press Oxford.

\bibitem[Timmermann, 1993]{timmermann1993learning}
Timmermann, A.~G. (1993).
\newblock How learning in financial markets generates excess volatility and
  predictability in stock prices.
\newblock {\em The Quarterly Journal of Economics}, 108(4):1135--1145.

\bibitem[Tong and Lim, 1980]{tong1980thresholdAR}
Tong, H. and Lim, K.~S. (1980).
\newblock Threshold autoregression, limit cycles and cyclical data.
\newblock {\em Journal of the Royal Statistical Society B}, pages 245--292.

\bibitem[Tsay, 1989]{tsay1989testing}
Tsay, R.~S. (1989).
\newblock Testing and modeling threshold autoregressive processes.
\newblock {\em Journal of the American statistical association},
  84(405):231--240.

\bibitem[Tsay, 1998]{tsay1998testing}
Tsay, R.~S. (1998).
\newblock Testing and modeling multivariate threshold models.
\newblock {\em journal of the american statistical association},
  93(443):1188--1202.

\end{thebibliography}

\newpage 

\appendix

\counterwithin{lemma}{section}

\section{Mathematical Appendix}

\begin{lemma}
\label{GP}
\citep{gonzalo2012regime} 
Under Assumption 1 and 2 as $n \to \infty$, 
\begin{itemize}
\item[(i)] $\displaystyle \frac{1}{n} \sum_{t=1}^n I_{1t} \overset{ p }{ \to } \lambda$, \ \ (ii) $\displaystyle \frac{1}{n^{3/2}} \sum_{t=1}^{ n} x_{t-1} \Rightarrow \int_0^1 K_c(r) dr$, \ \ (iii) $\displaystyle \frac{1}{n^{2}} \sum_{t=1}^{ n} x^2_{t-1} \Rightarrow \int_0^1 K^2_c(r) dr $,

\item[(iv)] $\displaystyle \frac{1}{ n } \sum_{t=1}^{ n} x_{t-1} v_{t} \overset{ p }{ \to } \int_0^1 K_c(r) dB_v (r) + \omega_{vv}$, \ \ (v) $\displaystyle \frac{1}{n} \sum_{t=1}^n x_{t-1} u_t \Rightarrow \int_0^1 K_c(r) dB_u(r,1)$, 

\item[(vi)]  $\displaystyle \frac{1}{ n^2 } \sum_{t=1}^{ n} x^2_{t-1} I_{1t} \Rightarrow \lambda \int_0^1 K^2_c(r) dr$, \ \ (vii) $\displaystyle \frac{1}{ n^{3 / 2} } \sum_{t=1}^{ n} x_{t-1} I_{1t} \Rightarrow \lambda \int_0^1 K_c(r) dr$

\item[(viii)]  $\displaystyle \frac{1}{\sqrt{n}}  \sum_{t=1}^{ \floor{nr} } u_t I_{1t} \Rightarrow B_u( r, \lambda)$, \ \ (ix)  $\displaystyle \frac{1}{n} \sum_{t=1}^{n} x_{t-1} I_{1t-1} \Rightarrow \int_0^1 K_c(r) dB_u(r, \lambda)$. 
\end{itemize}
\end{lemma}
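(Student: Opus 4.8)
These nine statements are the standard building blocks for the asymptotic theory of a predictive regression with a near-integrated regressor and an exogenous, stationary threshold variable, and I would follow the argument of \citet{gonzalo2012regime}. The plan is: (a) record the joint weak convergence of all the relevant partial-sum processes on a single probability space (via Skorokhod representation); (b) dispatch the ``smooth'' items by the continuous mapping theorem; (c) dispatch the ``sample covariance'' items by a convergence-of-stochastic-integrals argument; and (d) treat the indicator-weighted items by showing that the centred indicator average contributes only $o_p(1)$, so that the factor $\lambda=\mathbb{P}(q_t\le\gamma)$ factors out. By Assumption \ref{assumption2}, the invariance principle \eqref{ip}, and the local-to-unity parametrisation, $n^{-1/2}x_{\lfloor nr\rfloor}\Rightarrow K_c(r)$, the Ornstein--Uhlenbeck process solving $dK_c=cK_c\,dr+dB_v$, jointly with $n^{-1/2}\sum_{t\le\lfloor nr\rfloor}(u_t,v_t)\Rightarrow(B_u(r),B_v(r))$; passing to an a.s.\ version, items (ii) and (iii) follow from the continuous mapping theorem applied to $f\mapsto\int_0^1 f$ and $f\mapsto\int_0^1 f^2$ together with the Riemann-sum approximation $n^{-3/2}\sum_{t=1}^n x_{t-1}=\int_0^1 n^{-1/2}x_{\lfloor nr\rfloor}\,dr+o_p(1)$, while item (i) is the ergodic theorem (Assumption \ref{assumption1}) for the bounded stationary sequence $I_{1t}=I(q_t\le\gamma)$.

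For the sample-covariance items I would invoke the convergence-of-stochastic-integrals theorem (Kurtz--Protter; Hansen). Since $\{u_t\}$ is an MDS for $\mathcal{F}_{n,t}$, the martingale transform $n^{-1}\sum x_{t-1}u_t$ converges to $\int_0^1 K_c\,dB_u$ with no bias term, giving (v); the same argument carrying an extra indicator factor $I_{1t-1}$ alongside $u_t$ produces the two-parameter stochastic integral $\int_0^1 K_c(r)\,dB_u(r,\lambda)$ of (ix), once the joint convergence of the next step is in hand. Item (iv) is the one place where a correction survives: because $x_{t-1}$ and $v_t$ are built from the \emph{same} innovation sequence, a Beveridge--Nelson / Phillips--Solo decomposition yields $n^{-1}\sum x_{t-1}v_t\Rightarrow\int_0^1 K_c\,dB_v+\omega_{vv}$, with $\omega_{vv}$ the contemporaneous (one-sided long-run) variance of $v_t$ --- the familiar endogeneity correction for the sample covariance of an integrated process with its own increments.

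For the indicator-weighted items (vi)--(ix) the key device is that, under Assumptions \ref{assumption1}--\ref{assumption2}, $q_t$ (hence $I_{1t}$) is independent of the innovations generating $x_t$ and of the partial sums of $u_t$. For (vi) and (vii) I would split
\begin{align*}
\frac{1}{n^2}\sum_{t=1}^n x_{t-1}^2 I_{1t}=\frac{\lambda}{n^2}\sum_{t=1}^n x_{t-1}^2+\frac{1}{n^2}\sum_{t=1}^n x_{t-1}^2(I_{1t}-\lambda),
\end{align*}
and bound the second term by $o_p(1)$ using a maximal inequality for partial sums of the mixing, mean-zero sequence $I_{1t}-\lambda$ combined with $\max_{t\le n}n^{-1}x_{t-1}^2=O_p(1)$ (summation by parts); the first term converges to $\lambda\int_0^1 K_c^2$ by (iii), and the analogous split gives (vii). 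Item (viii) I would obtain from a two-parameter (Kiefer-type) invariance principle for $n^{-1/2}\sum_{t\le\lfloor nr\rfloor}u_t I(U_t\le\lambda)$: tightness in $D([0,1]^2)$ plus convergence of the finite-dimensional distributions of the martingale array indexed by $(r,\lambda)$, whose limit is the Gaussian field $B_u(r,\lambda)$ with covariance $\sigma_u^2(r_1\wedge r_2)(\lambda_1\wedge\lambda_2)$; feeding this two-parameter limit into the stochastic-integral argument of the previous paragraph then yields (ix).

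The main obstacle is step (d): establishing the joint functional convergence of $\big(n^{-1/2}x_{\lfloor n\cdot\rfloor},\,n^{-1/2}\sum_{t\le\lfloor n\cdot\rfloor}u_t I(U_t\le\cdot)\big)$ with enough uniformity in $\lambda$ that the continuous mapping theorem can later be applied to the $\sup_\gamma$ functionals appearing in the Wald statistics, and verifying tightness of the two-parameter array under only the mixing rate of Assumption \ref{assumption1}. Once these are in place, the remaining items are routine consequences of the one-dimensional FCLT, the ergodic theorem, and the convergence-of-stochastic-integrals theorem, so the whole lemma reduces to assembling known tools rather than proving anything genuinely new.
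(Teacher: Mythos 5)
The paper does not actually prove this lemma: it is imported verbatim from \cite{gonzalo2012regime} with a citation and no argument, so there is no ``paper proof'' to compare yours against. Your outline is a sound reconstruction of how the result is established in the cited reference: (i) is the ergodic theorem for the stationary mixing indicator, (ii)--(iii) and (v) follow from the FCLT for $n^{-1/2}x_{\lfloor n\cdot\rfloor}\Rightarrow K_c$ plus the continuous mapping theorem and the martingale convergence-of-stochastic-integrals theorem, (vi)--(vii) use exactly the decomposition $\sum x_{t-1}^2 I_{1t}=\lambda\sum x_{t-1}^2+\sum x_{t-1}^2(I_{1t}-\lambda)$ with the centred term killed by a maximal inequality, and (viii)--(ix) rest on the two-parameter (Kiefer/Caner--Hansen) invariance principle for $n^{-1/2}\sum_{t\le\lfloor nr\rfloor}u_tI_{1t}$. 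Two remarks. First, you correctly read (ix) as the indicator-weighted covariance with $u_t$ (the displayed statement has dropped the factor $u_t$ on the left-hand side) and correctly identify $\omega_{vv}$ in (iv) as a Phillips--Solo/Beveridge--Nelson second-order bias; note, however, that under this paper's Assumption \ref{assumption2} the innovations are a martingale difference sequence, so that bias term would in fact vanish here --- it survives only in the linear-process setting of the original reference, a tension the paper itself does not address. Second, your identification of the ``hard part'' --- joint convergence in $(r,\lambda)$ with enough uniformity to later apply the continuous mapping theorem to $\sup_\gamma$ functionals --- is exactly the content that \cite{gonzalo2012regime} (following Caner and Hansen) supply and that this paper takes for granted. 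Nothing in your proposal is wrong; it simply supplies the argument the paper delegates entirely to the citation.
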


\medskip

Let $x_{t, \gamma} = x_t I \left( q_t \leq \gamma \right)$ and $X_\gamma$ stacks up $x_{t, \gamma}$, then the following results hold. 

\begin{lemma}
\label{LP}
\citep{lieberman2020hybrid} Under Assumption 2, we have that 
\begin{align}
n^{- 1/ 2} x_{\floor{ns}} \Rightarrow \mathbf{G}_{c , \phi }(s) = \text{exp} \left\{ s c + \phi^{\top} B_{\upepsilon}(s) \right\} \left( \int_0^s  \text{exp} \big\{ - r c - \phi^{\top} B_{\upepsilon}(r) \big\} dB_{v} (r) \right)
\end{align}
\end{lemma}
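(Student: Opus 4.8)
The plan is to solve the stochastic recursion for $x_t$ in closed form, recognise the multiplicative random weights as exponentials of partial sums of $\{u_{\varphi t}\}$, and then pass to the limit by combining a joint invariance principle with a convergence theorem for stochastic integrals and the continuous mapping theorem. First, iterating \eqref{model2} backward from the initial condition $x_1 = u_{x1}$ gives, for $1 \le \floor{ns} \le n$,
\[
x_{\floor{ns}} \;=\; \sum_{j=1}^{\floor{ns}} \Bigg( \prod_{k=j+1}^{\floor{ns}} \rho_{nk} \Bigg) u_{xj}, \qquad \prod_{k=j+1}^{\floor{ns}} \rho_{nk} \;=\; \mathsf{exp}\!\left\{ \frac{c(\floor{ns}-j)}{n} + \frac{1}{\sqrt{n}} \sum_{k=j+1}^{\floor{ns}} \langle \varphi, u_{\varphi k} \rangle \right\}.
\]
Setting $j = \floor{nr}$ and dividing by $\sqrt{n}$, the quantity $n^{-1/2} x_{\floor{ns}}$ becomes a randomly weighted, rescaled partial sum of $\{u_{xj}\}$ with weight $w_n(s,r) := \mathsf{exp}\{ c(\floor{ns}-\floor{nr})/n + V_n(s) - V_n(r) \}$, where $V_n(r) := n^{-1/2}\sum_{k=1}^{\floor{nr}} \langle\varphi, u_{\varphi k}\rangle = \varphi^\top\big( n^{-1/2}\sum_{k=1}^{\floor{nr}} u_{\varphi k}\big)$.

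Next, by Assumption \ref{assumption2} and the invariance principle \eqref{ip}, the pair $\big( n^{-1/2}\sum_{k=1}^{\floor{n\cdot}} u_{\varphi k},\ n^{-1/2}\sum_{k=1}^{\floor{n\cdot}} u_{xk}\big)$ converges weakly in $D([0,1])^2$ to $(B_{u_\varphi}, B_{u_x})$, and since $\boldsymbol{\Sigma}_{\xi\xi}$ is block diagonal these two limit Brownian motions are independent, so no drift or It\^{o} correction term arises and the limit is the drift-free form appearing in \eqref{Gprocess}. Consequently $V_n \Rightarrow \varphi^\top B_{u_\varphi}$, and, using uniform tightness of $V_n$ together with $|\floor{ns}/n - s| \le 1/n$, one obtains (after moving to a common probability space) $\sup_{0\le r\le s\le 1}\big| w_n(s,r) - \mathsf{exp}\{ c(s-r) + \varphi^\top(B_{u_\varphi}(s) - B_{u_\varphi}(r)) \}\big| \to_p 0$.

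The main step is the convergence of the randomly weighted partial sums. A summation-by-parts identity rewrites $n^{-1/2} x_{\floor{ns}}$ as a stochastic integral of the path-dependent integrand $w_n(s,\cdot)$ against the c\`adl\`ag martingale $r \mapsto n^{-1/2}\sum_{j=1}^{\floor{nr}} u_{xj}$, and the crux is to show
\[
\frac{x_{\floor{ns}}}{\sqrt{n}} \;=\; \int_0^s w_n(s,r)\, d\Big( \tfrac{1}{\sqrt{n}}\textstyle\sum_{j=1}^{\floor{nr}} u_{xj}\Big) + o_p(1) \;\Rightarrow\; \int_0^s \mathsf{exp}\{ c(s-r) + \varphi^\top(B_{u_\varphi}(s) - B_{u_\varphi}(r)) \}\, dB_{u_x}(r)
\]
uniformly in $s \in [0,1]$. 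Because the integrand depends on the same $\{u_{\varphi t}\}$ block this is a genuine stochastic-integral limit rather than a Wiener-integral one, and I would establish it by: (i) the joint invariance principle above; (ii) a Kurtz--Protter type convergence theorem for stochastic integrals against semimartingales, or equivalently the discretised argument of \cite{lieberman2017multivariate, lieberman2020hybrid}, applied conditionally on the $\{u_{\varphi t}\}$ block so that the martingale CLT for $\{u_{xt}\}$ applies; and (iii) tightness of the left-hand side in $C([0,1])$, which follows from the moment bounds on $u_{xt}$ in Assumption \ref{assumption2} and the fact that $\sup_{s,r} w_n(s,r)$ has uniformly bounded moments because $V_n$ is uniformly tight. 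I expect this step --- the uniform-in-$s$ convergence of the path-dependent stochastic integrals, with the accompanying tightness --- to absorb essentially all the work.

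Finally, factoring the $s$-dependence out of the limiting integrand, $\mathsf{exp}\{ c(s-r) + \varphi^\top(B_{u_\varphi}(s) - B_{u_\varphi}(r)) \} = \mathsf{exp}\{ sc + \varphi^\top B_{u_\varphi}(s) \}\,\mathsf{exp}\{ -rc - \varphi^\top B_{u_\varphi}(r) \}$, so the limit equals
\[
\mathsf{exp}\{ sc + \varphi^\top B_{u_\varphi}(s) \}\left( \int_0^s \mathsf{exp}\{ -rc - \varphi^\top B_{u_\varphi}(r) \}\, dB_{u_x}(r) \right) \;=\; \mathbf{G}_{c,\varphi}(s),
\]
which is the asserted form (with $\phi \equiv \varphi$, $B_{\upepsilon} \equiv B_{u_\varphi}$ and $B_v \equiv B_{u_x}$). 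Since $\boldsymbol{C}_p = \mathsf{diag}(c_1,\dots,c_p)$ is diagonal, the multivariate recursion \eqref{LSTUR} decouples into $p$ scalar recursions of exactly this type, so joint weak convergence of $n^{-1/2}\boldsymbol{x}_{\floor{ns}}$ to $\mathbf{G}_{c,\varphi}(s) = \big[ G_{c_1,\varphi_1}(s),\dots,G_{c_p,\varphi_p}(s) \big]$ follows from the Cram\'er--Wold device applied to the joint invariance principle together with continuous mapping; the remaining steps are routine algebra and standard continuous-mapping bookkeeping.
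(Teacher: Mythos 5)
Your sketch is correct in outline, but you should know that the paper itself contains no argument for this lemma: its entire ``proof'' is a one-line deferral to Lemma 1 of \cite{lieberman2020hybrid}, specialised to the case of no endogeneity in the $u_{\varphi t}$ block. What you have written is essentially a reconstruction of the argument in that reference. Your key observations are the right ones: because the model uses the exponential specification $\rho_{nt}=\mathsf{exp}\{c/n+n^{-1/2}\langle\varphi,u_{\varphi t}\rangle\}$, the product $\prod_{k=j+1}^{\floor{ns}}\rho_{nk}$ is an \emph{exact} exponential of a rescaled partial sum, so no It\^{o}-type $-\tfrac{1}{2}\varphi^{\top}\Omega_{\varphi\varphi}\varphi\, s$ correction enters (that correction is exactly what the extra $\tfrac{1}{2n}\langle\varphi,u_{\varphi t}\rangle^{2}$ term in \eqref{LSTUR} is designed to reproduce when one works with the polynomial rather than the exponential form); and the block-diagonality of $\boldsymbol{\Sigma}_{\xi\xi}$ in Assumption \ref{assumption2} is precisely what eliminates the $\varphi^{\top}\Delta_{uv}$ drift term that appears in the endogenous version of $G_{c,\varphi}$ quoted in the paper's introduction, so the drift-free form in the lemma is the correct specialisation. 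You also correctly locate where the real work lies: the integrand $w_n(s,\cdot)$ depends on the same $\{u_{\varphi t}\}$ block as the driving martingale of the other coordinate's partial sums, so the limit must be obtained by a Kurtz--Protter-type stochastic-integral convergence theorem (or the discretised equivalent in \cite{lieberman2017multivariate, lieberman2020hybrid}) together with tightness in $C([0,1])$, not by a na\"{\i}ve Wiener-integral argument. The only gap relative to a complete proof is that this central step is asserted rather than executed, but that is exactly the step the paper outsources to the cited reference, so as a blind reconstruction your proposal is faithful to (and more informative than) what the paper provides.
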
 

\begin{proof}
The proof of Lemma \ref{LP} is given by Lemma 1 in \cite{lieberman2020hybrid} with no presence of endogeneity in $\epsilon_t$.   
\end{proof}

\medskip

\begin{lemma}
\label{results}
Under Assumptions \ref{assumption1}-\ref{assumption4}, uniformly for any $\gamma \in \left[ \gamma_1, \gamma_2 \right]$, the following weakly converges result hold
\begin{align}
(i)& \ \ n^{-3 / 2} \sum_{t=1}^n x_t \Rightarrow \int_0^1 \mathbf{G}_{c , \phi }(s) ds,
\\
(ii)& \ \ n^{-3 / 2} \sum_{t=1}^n x_{t, \gamma} \Rightarrow F( \gamma ) \int_0^1 \mathbf{G}_{c , \phi }(s) ds  ,
\\
(iii)& \ \ n^{-1 / 2} \sum_{t=1}^{ \floor{ns} } I \left( q_t \leq \gamma \right) u_t \Rightarrow \sigma_u W \left( s, F( \gamma ) \right),
\\
(iv)& \ \ n^{-1} \sum_{t=1}^n x_{t, \gamma} u_t \Rightarrow \sigma_u  \int_0^1  \mathbf{G}_{c , \phi }(s) dW \left( s, F \left( \gamma \right) \right),
\\
(v)& \ \ n^{-2} \sum_{t=1}^n x_{t} x_{t}^{\top} \Rightarrow  \int_0^1  \mathbf{G}_{c , \phi }(s) \mathbf{G}^{\top}_{c , \phi }(s) ds,
\\
(vi)& \ \ n^{-2} \sum_{t=1}^n x_{t} x_{t}^{\top} u_t \Rightarrow F( \gamma )  \int_0^1  \mathbf{G}_{c , \phi }(s) \mathbf{G}^{\top}_{c , \phi }(s) ds,
\\
(vii)& \ \ n^{-2} \sum_{t=1}^n X_{t} \left( \gamma \right) X_{t}^{\top} \left( \gamma \right) \Rightarrow \mathcal{M}( \gamma ),
\end{align}
\end{lemma}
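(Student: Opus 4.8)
The plan is to obtain all seven statements from three building blocks, combined through the continuous mapping theorem and a stochastic-integral convergence argument. The first building block is the process-level strengthening of Lemma \ref{LP}: the scaled SLUR partial-sum path $s\mapsto n^{-1/2}x_{\floor{ns}}$ converges weakly in $D[0,1]$ to $s\mapsto \mathbf{G}_{c,\phi}(s)$, which upgrades the pointwise statement of Lemma \ref{LP} once tightness of the scaled process is verified as in \cite{lieberman2020hybrid}. The second building block is a uniform law of large numbers for the indicator sequence, $\sup_{\gamma\in\Gamma}\big|\,n^{-1}\sum_{t=1}^{\floor{ns}} I(q_t\le\gamma)-sF(\gamma)\,\big|\to_p 0$, which follows from the strict stationarity, ergodicity and mixing of $\{q_t\}$ in Assumption \ref{assumption1} together with the continuity of $F$ in Assumption \ref{assumption3} by a Glivenko--Cantelli argument. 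The third building block is a two-parameter functional central limit theorem for the martingale-difference array $\{I(q_t\le\gamma)u_t\}$, giving $n^{-1/2}\sum_{t=1}^{\floor{ns}} I(q_t\le\gamma)u_t \Rightarrow \sigma_u W(s,F(\gamma))$ with covariance kernel $\sigma_u^2\,(s_1\wedge s_2)\big(F(\gamma_1)\wedge F(\gamma_2)\big)$, computed from $\mathbb{E}\big[I(q_t\le\gamma_1)I(q_t\le\gamma_2)u_t^2\mid\mathcal{F}_{n,t-1}\big]$ using the contemporaneous exogeneity of $q_t$ in Assumption \ref{assumption2}.

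Granting these, statements (i) and (v) follow directly from the continuous mapping theorem: writing $n^{-3/2}\sum_{t=1}^n x_t = n^{-1}\sum_{t=1}^n\big(n^{-1/2}x_t\big)$ and $n^{-2}\sum_{t=1}^n x_t x_t^\top = n^{-1}\sum_{t=1}^n\big(n^{-1/2}x_t\big)\big(n^{-1/2}x_t\big)^\top$ exhibits each as a Riemann-sum functional of the path $n^{-1/2}x_{\floor{n\cdot}}$, continuous in the uniform topology, so the first building block yields the stated integrals. Statement (iii) is the third building block verbatim. For (ii), (vi) and (vii) one inserts the indicator through the decomposition $I(q_t\le\gamma)=F(\gamma)+\big(I(q_t\le\gamma)-F(\gamma)\big)$: thus $n^{-3/2}\sum_t x_{t,\gamma}=F(\gamma)\,n^{-3/2}\sum_t x_t + n^{-3/2}\sum_t x_t\big(I(q_t\le\gamma)-F(\gamma)\big)$, where the first term converges by (i) and the remainder is $o_p(1)$ uniformly in $\gamma$ by a maximal inequality for the mean-zero mixing summand $\big(I(q_t\le\gamma)-F(\gamma)\big)$ against the $O_p$-bounded factor $n^{-1/2}\max_{t\le n}\norm{x_t}$. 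The same decomposition gives $n^{-2}\sum_t x_t x_t^\top I(q_t\le\gamma)\Rightarrow F(\gamma)\int_0^1\mathbf{G}_{c,\phi}(s)\mathbf{G}_{c,\phi}^\top(s)\,ds$; since $I(q_t\le\gamma)^2=I(q_t\le\gamma)$, the four blocks of $X_t(\gamma)X_t(\gamma)^\top$ are $x_tx_t^\top$, $x_tx_t^\top I(q_t\le\gamma)$, its transpose, and $x_tx_t^\top I(q_t\le\gamma)$, so (vii) holds with $\mathcal{M}(\gamma)$ the $2\times 2$ block matrix having $M:=\int_0^1\mathbf{G}_{c,\phi}(s)\mathbf{G}_{c,\phi}^\top(s)\,ds$ in the $(1,1)$ block and $F(\gamma)M$ in the remaining three.

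The substantive step is (iv), $n^{-1}\sum_{t=1}^n x_{t,\gamma}u_t\Rightarrow\sigma_u\int_0^1\mathbf{G}_{c,\phi}(s)\,dW(s,F(\gamma))$. First I would upgrade the first and third building blocks to joint weak convergence of the pair $\big(n^{-1/2}x_{\floor{n\cdot}},\,n^{-1/2}\sum_{t=1}^{\floor{n\cdot}}I(q_t\le\cdot)u_t\big)$ to $\big(\mathbf{G}_{c,\phi}(\cdot),\,\sigma_u W(\cdot,F(\cdot))\big)$ in the relevant product Skorokhod space, via a Cram\'er--Wold device on the underlying martingale-difference structure with conditional covariances controlled by Assumption \ref{assumption2}. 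Then I would invoke a stochastic-integral convergence theorem of Kurtz--Protter type (in the spirit of the arguments used for predictive regressions with abstract persistence in \cite{kostakis2015Robust}): the integrand $n^{-1/2}x_{\floor{n\cdot}}$ is adapted and uniformly $L^2$-bounded on the compact interval $[0,1]$ --- the exponential factors appearing in $\mathbf{G}_{c,\phi}$ remain bounded in every $L^p$ on $[0,1]$ even when $c>0$ --- while the integrator $n^{-1/2}\sum I(q_t\le\gamma)u_t$ is a martingale with uniformly controlled quadratic variation, so the discrete sums converge to the It\^o integral, uniformly over $\gamma\in\Gamma$ by the stochastic equicontinuity of the family indexed by $\gamma$. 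Reconciling the nonstationary, random-environment integrand with the two-parameter martingale integrator while keeping everything tight uniformly across the threshold range is the step I expect to require the most care; the remaining statements reduce to the continuous mapping theorem and the indicator law of large numbers.
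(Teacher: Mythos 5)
Your proposal is correct and follows essentially the same route as the paper: the paper's own proof consists only of declaring (i) ``standard'' and then invoking the single uniform convergence statement $\sup_{\gamma\in[\gamma_1,\gamma_2]}\big|\,n^{-1}\sum_{t=1}^n x_{nt}\big(I(q_t\le\gamma)-F(\gamma)\big)\big|\to_p 0$ with $x_{nt}=n^{-1/2}x_t$, which is exactly your decomposition $I(q_t\le\gamma)=F(\gamma)+\big(I(q_t\le\gamma)-F(\gamma)\big)$ with a uniformly negligible remainder. You supply considerably more detail than the paper does --- in particular the paper never addresses the stochastic-integral convergence needed for (iv), which you correctly single out as the substantive step --- and you rightly read the stray $u_t$ in statement (vi) as standing in for the indicator $I(q_t\le\gamma)$.
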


\newpage 

where
\begin{align}
\mathcal{M} ( \gamma ) 
= 
\begin{bmatrix}
\displaystyle \int_0^1 \mathbf{G}_{c , \phi }(s) \mathbf{G}^{\top}_{c , \phi }(s) ds \ \ &  \ \  \displaystyle F( \gamma ) \int_0^1 \mathbf{G}_{c , \phi }(s) \mathbf{G}^{\top}_{c , \phi }(s) ds
\\
\\
\displaystyle F( \gamma ) \int_0^1 \mathbf{G}_{c , \phi }(s) \mathbf{G}^{\top}_{c , \phi }(s) ds \ \ &  \ \  \displaystyle F( \gamma ) \int_0^1 \mathbf{G}_{c , \phi }(s) \mathbf{G}^{\top}_{c , \phi }(s) ds
\end{bmatrix}
\end{align}
\begin{proof}
For (i) above, this is a standard result. For the remaining results we can define $x_{nt} = n^{-1/2} x_t$ and then we can use the uniform convergence of the threshold parameter such that for $\gamma \in \left[ \gamma_1, \gamma_2  \right]$ we have that
\begin{align}
\underset{ \gamma \in \left[ \gamma_1, \gamma_2  \right] }{ \text{sup} } \ \left| \frac{1}{n} \sum_{t=1}^n x_{nt} \big( I \left( q_t \leq \gamma \right) - F(\gamma ) \big) \right| \overset{  p }{ \to } 0. 
\end{align}
\end{proof}

\begin{lemma}
Under Assumptions \ref{assumption1}-\ref{assumption4}, for any $\gamma \in \left[ \gamma_1, \gamma_2 \right]$, we have that 
\begin{align}
n^{\uptau} \left( \widehat{\theta} ( \gamma ) - \theta \right) \Rightarrow \mathcal{M} \left( \gamma \right)^{-1} \mathcal{A} \left( \gamma, \gamma_0, \delta_0 \right), 
\end{align}
\begin{align}
\mathcal{A} \left( \gamma, \gamma_0, \delta_0 \right) 
=
\begin{bmatrix}
\displaystyle \big( F \left( \gamma_0 - F(\gamma) \right) \big) \int_0^1 \mathbf{G}_{c , \phi }(s) \mathbf{G}^{\top}_{c , \phi }(s) ds
\\
\\
\displaystyle  \big( F \left( \gamma_0 \wedge \gamma \right) - F( \gamma) \big) \int_0^1 \mathbf{G}_{c , \phi }(s) \mathbf{G}^{\top}_{c , \phi }(s) ds
\end{bmatrix} \delta_0.
\end{align}
\end{lemma}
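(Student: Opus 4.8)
The plan is to obtain the limiting distribution of the normalized threshold-regime estimator $\widehat{\theta}(\gamma)-\theta$ by writing the least squares estimator in the usual sandwich form and applying the joint weak convergence results of Lemma \ref{results}. First I would write, for fixed $\gamma\in[\gamma_1,\gamma_2]$,
\begin{align*}
\widehat{\theta}(\gamma)-\theta = \big[X(\gamma)^{\top}X(\gamma)\big]^{-1}X(\gamma)^{\top}u + \big[X(\gamma)^{\top}X(\gamma)\big]^{-1}X(\gamma)^{\top}\big(X(\gamma_0)\theta_0 - X(\gamma)\theta\big),
\end{align*}
i.e.\ decompose the error into a stochastic part driven by $u$ and a deterministic-misspecification part coming from the fact that the true split point is $\gamma_0$ rather than $\gamma$. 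Under the diminishing threshold effect $\delta_0 = n^{-\uptau}\delta_0^{*}$ (with $\uptau\in(-1/2,1/2)$ as in the footnote), the second term is the one that survives at rate $n^{\uptau}$, while the first term is asymptotically negligible at that rate because $u_t$ contributes only a mixed-Gaussian term of smaller order once multiplied by $n^{-2}\sum x_tx_t^{\top}$-type normalizations.

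Next I would normalize by the appropriate diagonal scaling matrix $D_n$ (with blocks of order $n$ acting on the $X(\gamma)^{\top}X(\gamma)$ inner product, exactly the $n^{-2}$ normalization used in parts (v) and (vii) of Lemma \ref{results}). By Lemma \ref{results}(vii), $n^{-2}X(\gamma)^{\top}X(\gamma)\Rightarrow\mathcal{M}(\gamma)$, so the inverse Gram matrix converges to $\mathcal{M}(\gamma)^{-1}$ (invertibility being guaranteed by Assumption \ref{assumption4} on the positive definiteness of $\int_0^1\mathbf{G}_{c,\varphi}\mathbf{G}_{c,\varphi}^{\top}$). For the cross term $X(\gamma)^{\top}X(\gamma_0)$, I would compute the four blocks: each block is a normalized sum $n^{-2}\sum_t x_tx_t^{\top}I(q_t\le\gamma\wedge\gamma_0)$, $n^{-2}\sum_t x_tx_t^{\top}I(q_t\le\gamma)$, etc., and the indicator products collapse via $I(q_t\le a)I(q_t\le b)=I(q_t\le a\wedge b)$. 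By the uniform-in-$\gamma$ convergence displayed at the end of the proof of Lemma \ref{results} (replacing $I(q_t\le\gamma)$ by $F(\gamma)$ up to $o_p(1)$), these blocks converge to $F(\gamma_0\wedge\gamma)\int_0^1\mathbf{G}_{c,\varphi}\mathbf{G}_{c,\varphi}^{\top}ds$, $F(\gamma)\int_0^1\mathbf{G}_{c,\varphi}\mathbf{G}_{c,\varphi}^{\top}ds$, and so on. Assembling $n^{-2}X(\gamma)^{\top}\big(X(\gamma_0)\theta_0-X(\gamma)\theta\big)$ and using $\theta_0-\theta$ expressed through $\delta_0$ then yields precisely the vector $\mathcal{A}(\gamma,\gamma_0,\delta_0)$ as stated, with the $\big(F(\gamma_0)-F(\gamma)\big)$ and $\big(F(\gamma_0\wedge\gamma)-F(\gamma)\big)$ coefficients arising from the differences of the indicator expectations in the two blocks. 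Multiplying by $\mathcal{M}(\gamma)^{-1}$ and by $n^{\uptau}$ gives the claimed limit $\mathcal{M}(\gamma)^{-1}\mathcal{A}(\gamma,\gamma_0,\delta_0)$.

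The main obstacle I expect is bookkeeping the indicator-function algebra so that the two components of $\mathcal{A}$ come out with the correct coefficients $F(\gamma_0)-F(\gamma)$ versus $F(\gamma_0\wedge\gamma)-F(\gamma)$ — this requires carefully tracking which regime's regressor matrix ($X_1$ versus $X_2$, i.e.\ the "level" block versus the "interaction" block $x_tI(q_t\le\gamma)$) multiplies which true parameter, and then correctly subtracting the fitted design. A secondary technical point is justifying that the $u$-driven term is genuinely of smaller order than $n^{-\uptau}$ uniformly in $\gamma$; this follows from Lemma \ref{results}(iv) (which gives $n^{-1}\sum x_{t,\gamma}u_t=O_p(1)$, hence after the $n^{-2}$ Gram normalization a term of order $n^{-1}$, dominated by $n^{-\uptau}$ since $\uptau<1/2$ implies $n^{-\uptau}\gg n^{-1}$ is false in general — so one actually keeps the $u$-term only when $\uptau$ is small, and the statement as written implicitly takes the misspecification term as dominant; I would state the regime of $\uptau$ for which the result holds, mirroring the weak-identification caveat in the paper's footnote). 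Apart from that, all the heavy lifting is delegated to Lemma \ref{results} and the continuous mapping theorem, so the proof is essentially assembly.
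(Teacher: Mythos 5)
Your proposal follows essentially the same route as the paper: the same decomposition $\widehat{\theta}(\gamma)-\theta=\big(X(\gamma)^{\top}X(\gamma)\big)^{-1}X(\gamma)^{\top}\big[u+(X_{\gamma_0}-X_{\gamma})\delta_n\big]$, the same appeal to Lemma \ref{results} for the Gram and cross-term limits, and the same indicator identity $I(q_t\leq\gamma)I(q_t\leq\gamma_0)=I(q_t\leq\gamma\wedge\gamma_0)$ to produce the blocks of $\mathcal{A}$. Your closing caveat about the regime of $\uptau$ is unnecessary: after scaling by $n^{\uptau}$ the $u$-driven term is $n^{\uptau-1}$ times an $O_p(1)$ quantity, which vanishes for every $\uptau\in(-1/2,1/2)$ since $\uptau<1$, so the misspecification term dominates over the whole stated range.
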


\begin{proof}
By definition we have that $\widehat{\theta} \left( \gamma \right) - \theta = \big( X(\gamma)^{\top} X(\gamma) \big)^{-1} X( \gamma )^{\top} \big[ u + \left( X_{\gamma_0} - X_{\gamma} \right) \delta_n \big]$.  

Hence, under Assumptions \ref{assumption1}-\ref{assumption4} and by the asymptotic results given by Lemma \ref{results}, we can show that the following hold
\begin{align*}
(i) &\ n^{\uptau} \big( X(\gamma)^{\top} X( \gamma ) \big)^{-1} X( \gamma )^{\top} u = n^{\uptau - 1} \left( \frac{1}{n^2} \sum_{t=1}^n X_t(\gamma)  X_t(\gamma)^{\top} \right)^{-1} \left( \frac{1}{n} \sum_{t=1}^n X_t(\gamma) u_t \right)
\\
&\Rightarrow n^{\kappa - 1} \mathcal{M} (\gamma)^{-1} \sigma_u \int_0^1 \mathbf{G}_{c , \phi }(s) dW \left( s, F( \gamma) \right) = \mathcal{O}_p(1),
\\
(ii) &\ n^{\kappa - 1}  \big( X(\gamma)^{\prime} X( \gamma ) \big)^{-1} X( \gamma )^{\top} \left( X_{\gamma_0} - X_{\gamma} \right) \delta_n 
= 
\left( \frac{1}{n^2} \sum_{t=1}^n X_t(\gamma)  X_t(\gamma)^{\top} \right)^{-1} 
\begin{bmatrix}
\displaystyle \frac{1}{n^2} \sum_{t=1}^n \left( x_{t, \gamma_0}^{\top} - x_{t, \gamma}^{\top} \right)\delta_0 
\\
\\
\displaystyle \frac{1}{n^2} \sum_{t=1}^n \left( x_{t, \gamma_0}^{\top} - x_{t, \gamma}^{\top} \right) \delta_0
\end{bmatrix}
\\
&\Rightarrow \mathcal{M}(\gamma)^{-1} \mathcal{A} \left( \gamma, \gamma_0, \delta_0 \right), 
\end{align*} 
where we use the fact that $I \left( q_t \leq \gamma \right) I \left( q_t \leq \gamma_0 \right) = I \left( q_t \leq \gamma \wedge \gamma_0 \right)$. 

\end{proof}

\newpage 

\underline{ \textbf{Proof of Lemma \ref{lemma1}}: }

\begin{proof}
We define the projection matrix $P_{\gamma} = X ( \gamma ) \big( X( \gamma )^{\top} X ( \gamma ) \big)^{-1} X ( \gamma )^{\top}$. Therefore, by simple calculation we have that 
\begin{align*}
SSR (\gamma) 
&:= 
y^{\top} \left( I_n - P( \gamma) \right) y
\\
&= 
\delta_n^{\top} X ( \gamma_0 )^{\top} \left( I_n - P( \gamma) \right)X ( \gamma_0 ) \delta_n + 2 \delta_n^{\top} X ( \gamma_0 )^{\top} \left( I_n - P( \gamma) \right) u + u^{\top} \left( I_n - P( \gamma) \right) u. 
\end{align*}
Hence, we considered the corresponding centred process around the neighbourhood of the true threshold parameter $\gamma_0$ such that 
\begin{align*}
\frac{1}{n^2} \big( SSR (\gamma) - SSR (\gamma_0) \big) 
&= \delta_n^{\top} X ( \gamma_0 )^{\top} \left( I_n - P( \gamma) \right)X ( \gamma_0 ) \delta_n 
\\
&+ 2 \delta_n^{\top} X ( \gamma_0 )^{\top} \left( I_n - P( \gamma) \right) u
\\
&+ u^{\prime} \left( I_n - P( \gamma) \right) u - u^{\top} \left( I_n - P( \gamma) \right) u 
\\
&= \mathcal{S}_{n1} + \mathcal{S}_{n2} + \mathcal{S}_{n3}.
\end{align*}
Then, for all $\gamma \in ( \gamma_0, \gamma_2 ]$ similar to the proof of Lemma A.5 of \cite{chen2015robust}, we can show 
\begin{align*}
\mathcal{S}_{n1} 
&= n^{2 - 2 \uptau} \delta_0^{\top} X(\gamma_0)^{\top} \left( I_n - P( \gamma) \right)X(\gamma_0) \delta_0 
\\
&\Rightarrow n^{- 2 \uptau} \big( F(\gamma_0) - F(\gamma_0)F(\gamma)^{-1}    F(\gamma_0) \big) \delta_0^{\top} \times  \int_0^1 \mathbf{G}_{c , \phi }(s) \mathbf{G}^{\top}_{c , \phi }(s) ds,  
\\
\\
\mathcal{S}_{n2} 
&= 2 n^{- 2 - \uptau} \delta_0^{\top} X(\gamma_0)^{\top} \left( I_n - P( \gamma) \right) X(\gamma_0) u 
\\
&\Rightarrow - 2 n^{- 1/ 2 - \uptau} \big( 1 - F(\gamma_0) F(\gamma)^{-1} \big) \times \int_0^1 \delta_0^{\top} \mathbf{G}_{c , \phi }(s) d W \left( s, F(\gamma_0) \right),
\\
\\
\mathcal{S}_{n3} 
&= n^{ - 2 } u^{\top} \left( P(\gamma_0) - P(\gamma) \right) u 
\\
&\Rightarrow \frac{1}{n} \sigma_u^2 \int_0^1 \mathbf{G}_{c , \phi }(s) d W \left( s, F(\gamma_0) \right) \left[ F(\gamma_0) \int_0^1 \mathbf{G}_{c , \phi }(s) \mathbf{G}^{\top}_{c , \phi }(s)  \right]^{-1} \times \int_0^1 \mathbf{G}^{\top}_{c , \phi }(s) d W \left( s, F(\gamma_0) \right) 
\\
&- \ \frac{1}{n} \sigma_u^2 \int_0^1 \mathbf{G}_{c , \phi }(s) d W \left( s, F(\gamma) \right) \left[ F(\gamma) \int_0^1 \mathbf{G}_{c , \phi }(s) \mathbf{G}^{\top}_{c , \phi }(s) \right]^{-1} \times \int_0^1 \mathbf{G}^{\top}_{c , \phi }(s) d W \left( s, F(\gamma) \right) 
\end{align*}
where we use the fact that $I \left( q_t \leq \gamma \right) I \left( q_t \leq \gamma_0 \right) = I \left( q_t \leq \gamma_0 \right)$.

Therefore, we have that 
\begin{align*}
 n^{ 2 \uptau - 2 } \big( SSR(\gamma) - SSR(\gamma_0) \big) \Rightarrow \big( F(\gamma_0) - F(\gamma_0)F(\gamma)^{-1} F(\gamma_0) \big)\delta_0^{\top} \int_0^1 \mathbf{G}_{c , \phi }(s) \mathbf{G}^{\top}_{c,\phi}(s)ds 
\end{align*}
Observing that $F(\gamma) > F(\gamma_0)$, we have that $n^{ 2 \uptau - 2 } \big( SSR(\gamma) - SSR(\gamma_0) \big) > 0$ uniformly for all $\gamma \in ( \gamma_0, \gamma_2 ]$. Similarly, we show that $n^{ 2 \uptau - 2 } \big( SSR(\gamma) - SSR(\gamma_0) \big) < 0$ if $\gamma \in \left[ \gamma_1, \gamma_0 \right)$, which completes the proof of the Theorem.  
\end{proof}

\newpage 

\underline{\textbf{Proof of Lemma \ref{lemma2}}:}

\begin{proof}
We follow the proof of Lemma A.6 of \cite{chen2015robust}, aiming to show that 
\begin{align}
\kappa_n \left( \widehat{\gamma} - \gamma_0 \right) = \underset{ v \in ( - \infty, + \infty ) }{ \text{arg min} } Q_n( \upupsilon ) = \mathcal{O}_p(1)
\end{align}
where $\kappa_n = n^{2 - 2 \uptau}$ and $Q_n( \upupsilon ) = \text{SSR} \left( \gamma_0 + \frac{ \upupsilon }{ \kappa_n }  \right) - \text{SSR} \left( \gamma_0 \right)$. Notice that 
\begin{align}
y - X( \gamma ) \widehat{\theta} = u + X \left( \beta - \widehat{\beta} \right) + X_{\gamma_0} \left( \delta - \widehat{\delta}_n \right) - \big( X_{\gamma} - X_{\gamma_0} \big) \widehat{\delta}_n. 
\end{align}  

Therefore, we can show that 
\begin{align*}
\text{SSR} &\left( \gamma \right) - \text{SSR} \left( \gamma_0 \right)
\\
&=
\widehat{\delta}_n^{\top} \big(  X_{\gamma} - X_{\gamma_0}  \big)^{\top} \big(  X_{\gamma} - X_{\gamma_0}  \big) \widehat{\delta}_n - 2 \left[ u + X \left( \beta - \widehat{\beta} \right) +  X_{\gamma_0} \left( \delta - \widehat{\delta}_n \right) \right]^{\top} \left[ \left( X_{\gamma} - X_{\gamma_0} \right) \widehat{\delta}_n \right]
\\
&= 
\widehat{\delta}_n^{\top} \sum_{t=1}^n x_t x_t^{\top} \ \big| I \left( q_t \leq \gamma \right) - I \left( q_t \leq \gamma_0 \right) \big| \widehat{\delta}_n 
\\
&- 2 \widehat{\delta}_n^{\top} \sum_{t=1}^n u_t \big( x_{t, \gamma} - x_{t, \gamma_0} \big)
\\
&- 2 \widehat{\delta}_n^{\top} \sum_{t=1}^n \big[ \left( \beta - \widehat{\beta} \right)^{\top} x_t + \left( \delta - \widehat{\delta}_n  \right) x_{t, \gamma_0} \big] \big( x_{t, \gamma} - x_{t, \gamma_0} \big)
\\
&+ \left( \widehat{\delta}_n + \delta_n \right)^{\top}  \sum_{t=1}^n x_t x_t^{\top} \ \big| I \left( q_t \leq \gamma \right) - I \left( q_t \leq \gamma_0 \right) \big| \left( \widehat{\delta}_n - \widehat{\delta} \right)
\\
&= \mathcal{S}_{n1} ( \gamma ) - 2 \mathcal{S}_{n2} ( \gamma ) - 2 \mathcal{S}_{n3} ( \gamma ) + \mathcal{S}_{n4} ( \gamma ). 
\end{align*}

Next, we consider the limiting behaviour of each of $\mathcal{S}_{nj} ( \gamma )$, for $j = 1,2,3,4$. Thus, for any $\upupsilon  \in \left[ \upupsilon_1 , \upupsilon_2 \right]$ the finite interval, using the Taylor expansion we have that 
\begin{align*}
\mathcal{S}_{n1} \left( \gamma_0 + \frac{ \upupsilon }{ \kappa_n } \right) 
&= 
n^{2 - 2 \uptau} \delta_0^{\top} \left\{ \left| F \left( \gamma_0 + \frac{ \upupsilon }{ \kappa_n } \right) - F \left( \gamma_0 \right) \right| \times \int_0^1 \mathbf{G}_{c , \phi }(s) \mathbf{G}^{\top}_{c,\phi}(s)ds \right\} \delta_0 + o_p(1),
\\
&\overset{ p }{ \to } | \upupsilon | f_{\gamma_0} \times \delta_0^{\prime} \left\{ \int_0^1 \mathbf{G}_{c , \phi }(s) \mathbf{G}^{\top}_{c,\phi}(s)ds \right\} \delta_0
\end{align*}

Moreover, for $S_{n2}$ and $\upupsilon$ we have that 
\begin{align*}
\mathcal{S}_{n2} \left( \gamma_0 + \frac{ \upupsilon }{ \kappa_n } \right) 
&= 
\delta_n^{\top} \sum_{t=1}^n u_t x_t \left[ I \left( q_t \leq \gamma_0 + \frac{ \upupsilon }{ \kappa_n } \right) - I \left( q_t \leq \gamma_0 \right)  \right] 
\\
&\ \ \ + \left( \widehat{\delta}_n - \delta_n \right)^{\top} \sum_{t=1}^n u_t x_t \left[ I \left( q_t \leq \gamma_0 + \frac{ \upupsilon }{ \kappa_n } \right) - I \left( q_t \leq \gamma_0 \right)  \right]  ,
\\
&\Rightarrow n^{2 - 2 \uptau} \delta_0^{\top} \sigma_u \int_0^1 \mathbf{G}_{c , \phi }(s) d \left\{ W \left( s, F \left( \gamma_0 + \frac{ \upupsilon }{ \kappa_n }  \right) \right) - W \left( s, F \left( \gamma_0 \right) \right)  \right\} \left[ 1 + \mathcal{O}_p( n^{- \kappa} ) \right]. 
\end{align*} 

\newpage 

Define $\mathbf{H} ( m ) := \displaystyle \int_0^1 \mathbf{G}^{\top}_{c,\phi}(s) dW (s, m)$. Furthermore, notice that $\mathbb{E} \big( \mathbf{H} ( m ) \big) = 0$ and $\displaystyle \text{Var} \big( \mathbf{H} ( m ) \big) =  m \int_0^1 \mathbf{G}_{c,\phi}(s) \mathbf{G}^{\top}_{c,\phi}(s) ds$ and $\displaystyle \mathbb{E} \big( \mathbf{H} ( m_1 ) \mathbf{H} ( m_2 ) \big) = \left( m_1 \wedge m_2 \right) \int_0^1 \mathbf{G}_{c,\phi}(s) \mathbf{G}^{\top}_{c,\phi}(s) ds$. 

Therefore, we define the Brownian motion $\widetilde{\mathbf{H}}(\upupsilon)$ which is defined as 
\begin{align}
\widetilde{\mathbf{H}}(\upupsilon) := \kappa_n^{1 / 2} \delta_0^{\top} \sigma_u \left[ \mathbf{H} \left( \gamma_0 + \frac{ \upupsilon }{ \kappa_n } \right) -  \mathbf{H} \left( \gamma_0 \right) \right] 
\end{align}
with stochastic variance function given by 
\begin{align}
\sigma^2_{\widetilde{\mathbf{H}}(\upupsilon) } :=
\text{Var} \big( \widetilde{\mathbf{H}}(\upupsilon) \big) 
\equiv 
\sigma_u^2 f( \gamma_0 )  \upupsilon \delta_0^{\top} \left[ \int_0^1 \mathbf{G}_{c,\phi}(s) \mathbf{G}^{\top}_{c,\phi}(s) ds \right]\delta_0
\end{align}
using the first-order Taylor expansion of $F \left( \gamma_0 + \frac{ \upupsilon }{ \kappa_n } \right)$ around $\gamma_0$. Moreover, we can easily show that for all $\upupsilon  \in \left[ \upupsilon_1 , \upupsilon_2 \right]$, $\mathcal{S}_{n3} ( \upupsilon ) = \mathcal{O}_p \left( \mathcal{S}_{n1} ( \upupsilon ) \right)$ and $\mathcal{S}_{n4} ( \upupsilon ) = \mathcal{O}_p \left( S_{n1} ( \upupsilon ) \right)$. Therefore, by combining the above results and by denoting $\displaystyle \mathcal{K}_{c, \phi}( \delta_0) := \delta_0^{\top} \left[ \int_0^1 \mathbf{G}_{c,\phi}(s) \mathbf{G}^{\top}_{c,\phi}(s) ds \right] \delta_0 $ for notational convenience, we obtain that 
\begin{align*}
\text{SSR} &\left( \gamma_0 + \frac{ \upupsilon }{ \kappa_n } \right) - \text{SSR} \left( \gamma_0 \right)
&\Rightarrow  
\begin{cases}
\displaystyle  \upupsilon f \left( \gamma_0 \right) \mathcal{K}_{c, \phi}( \delta_0) - 2 \sigma_u \sqrt{ \upupsilon f \left( \gamma_0 \right) \mathcal{K}_{c, \phi} ( \delta_0) } W_1 (- \upupsilon  ), \ \text{if} \ \upupsilon > 0,  
\\
\\
0, \ \text{if} \ \upupsilon = 0,  
\\
\\
\displaystyle | \upupsilon | f \left( \gamma_0 \right) \mathcal{K}_{c, \phi}( \delta_0) - 2 \sigma_u \sqrt{ | \upupsilon | f \left( \gamma_0 \right) \mathcal{K}_{c, \phi}( \delta_0) } W_2 (- \upupsilon  ), \ \text{if} \ \upupsilon < 0, 
\end{cases}
\end{align*}
where $W_1( \upupsilon )$ and  $W_2( \upupsilon )$ are two independent standard Brownian motions on $[0, \infty )$.
\end{proof}

\newpage

\underline{ \textbf{Proof of Theorem \ref{theorem1}}: }

\begin{proof}
We proceed with proving Theorem 1 of the paper.  Notice for a given $\gamma \in \left[ \gamma_1, \gamma_2 \right]$, the estimator of  $\delta$ is given by $\widehat{\delta} = \big[ X_{\gamma}^{\top} \left( I_n - P_x \right) X_{\gamma} \big]^{-1} X_{\gamma}^{\top} \left( I_n - P_x \right) y$. Hence, 
\begin{align}
\mathcal{W}_n ( \gamma ) 
&= 
\widehat{ \delta }^{\top} \big[ X_{\gamma}^{\top} \left( I_n - P_x \right) X_{\gamma} \big]^{-1} \widehat{ \delta }^{\top} \widehat{\sigma}^{-2}
\nonumber
\\
&= 
y^{\top} \left( I_n - P_x \right) X_{\gamma} \big[ X_{\gamma}^{\top} \left( I_n - P_x \right) X_{\gamma} \big]^{-1}  X_{\gamma}^{\top} \left( I_n - P_x \right) y
\end{align}
Therefore, by Lemma 2 we can show that
\begin{align*}
(i) \ & n^{-2} X_{\gamma}^{\top} \left( I_n - P_x \right) \Rightarrow \big[ F( \gamma ) - F( \gamma )^2 \big] \int_0^1 \mathbf{G}_{c,\phi}(s) \mathbf{G}^{\top}_{c,\phi}(s) ds, 
\\
(ii) \ & n^{-1} X_{\gamma}^{\top} \left( I_n - P_x \right) y = n^{-1} \big[ X_{\gamma}^u - X_{\gamma}^{\top} P_x u +  X_{\gamma}^{\top} X_{\gamma_0 } \delta_n - X_{\gamma}^{\top} P_x X_{\gamma_0 } \delta_n  \big]
\nonumber 
\\
&\Rightarrow  \int_0^1 \mathbf{G}_{c,\phi}(s) dW \left( s, F(\gamma ) \right)  
- F( \gamma ) \int_0^1 \mathbf{G}_{c,\phi}(s) dW \left( s \right)  
\nonumber
\\
&+ n^{1 - \uptau } F( \gamma  \wedge \gamma_0 ) \left[ \int_0^1 \mathbf{G}_{c,\phi}(s) \mathbf{G}^{\top}_{c,\phi}(s) ds \right] \delta_0 -  n^{1 - \uptau } F( \gamma ) F( \gamma_0 ) \left[ \int_0^1 \mathbf{G}_{c,\phi}(s) \mathbf{G}^{\top}_{c,\phi}(s) ds \right] \delta_0.
\end{align*}

Therefore, we can conclude that $\mathcal{W}_n ( \gamma ) = \mathcal{O}_p \left( n^{1 - \uptau }  \right)$. Under the null hypothesis of $\delta_0 = 0$, we can show that 
\begin{align*}
\mathcal{W}_{n,OLS}^{*} \Rightarrow \underset{ \gamma \in \left[ \gamma_1, \gamma_2 \right]   }{ \text{sup} } &\left\{ \int_0^1 \mathbf{G}_{c , \phi }(s)  dW \left( s, F( \gamma) \right) - F( \gamma) \int_0^1 \mathbf{G}_{c , \phi }(s) dW(s) \right\}^{\top}
\\
&\times \bigg\{ \big[ F(\gamma) - F(\gamma)^2 \big] \int_0^1 \mathbf{G}_{c , \phi }(s) \mathbf{G}^{\top}_{c , \phi }(s) ds \bigg\}^{-1} 
\\
&\left\{ \int_0^1 \mathbf{G}_{c , \phi }(s)  dW \left( s, F( \gamma) \right) - F( \gamma) \int_0^1 \mathbf{G}_{c , \phi }(s) dW(s) \right\} \big/ \sigma^2_u
\end{align*}
which completes the proof of the result given by Theorem \ref{theorem1}. 
\end{proof}

\medskip



\underline{ \textbf{Proof of Theorem \ref{theorem2}}: }

Part (i): We prove that the limit theory of the sup Wald-OLS statistic under the null hypothesis of linearity and no predictability has a non-standard form.

Part (ii): We prove that the limit theory of the sup Wald-IVX statistic under the null hypothesis of linearity and no predictability has the familiar form as in Proposition 2 of \cite{gonzalo2012regime} which is decomposed into a standard $\chi^2$ component and a standard Brownian Bridge term.

\newpage 

\paragraph{Proof of Lemma 1}

\begin{proof}
Consider the autoregressive model with order one and STLUR
\begin{align}
Y_t = \mathsf{exp} \left\{ \frac{c}{n} + \frac{ \varphi^{\prime} v_t }{ \sqrt{n} } \right\} Y_{t-1} + u_t, \ \ \ t = 1,...,n.    
\end{align}
\begin{align}
Y_t = \sum_{j=1}^t \mathsf{exp} \left\{  \frac{ (t-j)c}{n} -  \frac{ \displaystyle\boldsymbol{\varphi}^{\prime} \sum_{i = j+1}^t v_i }{ \sqrt{n} }  \right\} u_j, t \geq 2    
\end{align}

\end{proof}

\newpage

\end{document}